\title{Quantifying model prediction sensitivity to model-form uncertainty}
\author[1]{Teresa Portone\thanks{Corresponding author: tporton@sandia.gov}}
\author[1]{Rebekah D. White}
\author[1]{Joseph L. Hart}
\affil[1]{Sandia National Laboratories}
\date{}
\begin{document}

\maketitle
\begin{abstract}
Model-form uncertainty (MFU) in assumptions made during physics-based model development is widely considered a significant source of uncertainty;
however, there are limited approaches that can quantify MFU in predictions extrapolating beyond available data.
As a result, it is challenging to know how important MFU is in practice, especially relative to other sources of uncertainty in a model, making it difficult to prioritize resources and efforts to drive down error in model predictions. 
To address these challenges, we present a novel method to quantify the importance of uncertainties associated with model assumptions. 
We combine parameterized modifications to assumptions (called MFU representations) with grouped variance-based sensitivity analysis to measure the importance of assumptions.
We demonstrate how, in contrast to existing methods addressing MFU, our approach can be applied without access to calibration data.
However, if calibration data is available, we demonstrate how it can be used to inform the MFU representation, and how variance-based sensitivity analysis can be meaningfully applied even in the presence of dependence between parameters (a common byproduct of calibration).
\end{abstract}

\section{Introduction}\label{sec:introduction}
\firstline{In high-consequence science and engineering problems it is often necessary to rely on physics-based mathematical models to predict the behavior of a complex system or phenomenon of interest, e.g.,~making predictions about the future or for conditions where experiments are impossible.}
However, in the process of model development, assumptions must be made that can degrade the trustworthiness of these models. 
Such assumptions include, but are not limited to, assumptions regarding invariance under transformation (e.g. rotation: isotropy, translation: homogeneity), couplings between phenomena, and dependencies (e.g., ignoring microscale dependence, assuming rate dependence).

\firstline{Uncertainty associated with the mathematical form and validity of these assumptions leads to \textit{model-form uncertainty (MFU)}, where the exact mathematical representation needed to model a phenomenon of interest is unknown or cannot be represented using the information present in the model (e.g., dependence on quantities below the length-scale resolved in the model).}
MFU and other sources of uncertainty, such as parameter uncertainty, result in uncertainties in model predictions that must be quantified to adequately characterize the trustworthiness of resulting predictions.
Being able to identify the most important sources of uncertainty is critical for determining where to invest resources to most improve prediction reliability through model improvement, uncertainty reduction, etc.
To this end, this work presents a first-of-its-kind method to quantify the impact of modeling assumptions on model predictions relative to other sources of uncertainty.

\firstline{There are a variety of ways to represent uncertainty in a model's representation of physics.}
The first established and most common approach was presented in~\cite{kennedy_bayesian_2001}, wherein a statistical model is appended to the output of a physical model, and its hyperparameters are calibrated to minimize misfit with data. 
In~\cite{asme_committee_vv_20_standard_2009} MFU is discussed and a process to evaluate it for a given data set is provided, but no guidance is provided on how such results could be interpolated (or extrapolated) beyond that data set.
In \cite{sargsyan_statistical_2015}, MFU is characterized by augmenting uncertainty in a modeling assumption’s parameterization to account for discrepancy between data and predictions without perturbing the model assumption’s mathematical form.
In \cite{acquesta_model-form_2022}, a data-driven modification to modeling assumptions is used to represent MFU. 
In \cite{subramanian_bayesian_2019,subramanian_error_2019,subramanian_model_2020} a statistical model for the time evolution of a correction to the assumption is inferred using time-series data.
All the aforementioned approaches rely on access to observational data, which is often unavailable in scenarios where model predictions are most needed, e.g.,~to inform policy and engineering design decisions.

\firstline{This work instead leverages what we term herein \textit{MFU representations}, which characterize MFU through parameterized modifications to modeling assumptions~\cite{oliver_validating_2015,morrison_representing_2018,portone_representing_2019,bandy_quantifying_2024,bandy_stochastic_2024,bandy_stochastic_2025}.}
Through their parameterizations, MFU representations are constrained to respect known phenomenology and relevant physics, such as conserved quantities, with the aim of maximizing their extrapolative capability.
To represent uncertainty, their parameterizations are assigned probability distributions, where relevant physical constraints and phenomenological behavior may also be encoded.
The resulting MFU representation provides a range of plausible behavior for the modeled phenomenon. 
Because they are developed to respect known physics and phenomenology, MFU representations can be developed without access to calibration data, which is a common situation in the preliminary stages of model development and uncertainty analysis.
However, if data is available, MFU representations can be informed using, e.g., Bayesian inference.
Hierarchical frameworks that calibrate the hyperparameters of an MFU parameterization are common, so we demonstrate one such a framework here.

\firstline{We use grouped variance-based sensitivity analysis (VBSA) to quantify the impact of physics modeling assumptions on model predictions herein.}
Namely, grouped Sobol' indices measure the fraction of variance in a model prediction that can be attributed to each input factor in a model. 
Standard practice has focused on individual model parameters as input factors, but an input factor can also be defined as a group of parameters~\cite{sobol_1993, sobol_2001, prieur_2017}.
In the context of a group of parameters, Sobol' indices measure the fraction of variance in a model prediction that can be attributed to all parameters in the group as well as to their interactions with each other. 

\firstline{In this work, we leverage grouped Sobol' indices to  provide a single sensitivity measure for the group of parameters comprising an MFU representation, relative to other sources of uncertainty in the model.}
Since multiple MFU representations may be valid in a given modeling scenario, we establish bounds for the differences in Sobol' indices corresponding to varying MFU representations.
Furthermore, we demonstrate how grouped VBSA can be used to analyze sensitivity to MFU when hierarchical probability densities are used, even after calibration, which often induces dependence between MFU and other model parameters. 

\firstline{While the proposed approach provides a measure of importance of modeling assumptions to model predictions, it can also measure the importance of model components/subcomponents whose parameterizations can be grouped in an uncertainty analysis.} 
Thus, our proposed methodology can be used to inform where to invest resources, e.g., more experimentation or computation, to drive down the most impactful sources of uncertainty on model predictions.
Furthermore, following the spirit of~\cite{li_role_2016,paquette-rufiange_optimal_2023}, it can be used to assess whether assumptions important to prediction use cases are also important in validation cases as a means of assessing the relevance of validation evidence.
The key contributions of this work include:
\begin{itemize}[label=\raisebox{0.2ex}{\tiny$\bullet$}, leftmargin=*, itemsep=0.5ex, parsep=0pt]
\item Presenting a novel approach to measure the impact of model-form uncertainty on model predictions using variance-based sensitivity analysis.
\item Proving that grouped variance-based sensitivity analysis is robust to an MFU representation's parameterization.
\item Analyzing computation of Sobol' indices for hierarchical distributions.
\end{itemize}

The remainder of the paper proceeds as follows.
\Cref{sec:back_methods} presents the methodology of our proposed approach and provides the relevant background related to MFU representations (\Cref{sec:MFU_representations}) and their calibration (\Cref{sec:MFU_calibration})  as well as
grouped Sobol' indices (\Cref{sec:grouped_indices}); an illustrative example is provided to build intuition.
In~\Cref{sec:robustness}, we establish robustness of the approach to varying MFU representations.
We demonstrate our approach on a practical application problem of upscaled contaminant transport in~\Cref{sec:results}.
Final conclusions are presented in~\Cref{sec:conclusions}.

\section{Background and methods}
\label{sec:back_methods}

\firstline{In this section, we provide the background and methodology needed to measure the impact of model-form uncertainty (MFU) on model predictions using variance-based sensitivity analysis (VBSA).}
We first provide the mathematical framework for MFU representations along with a discussion of how this approach differs from alternative model discrepancy approaches~\cite{kennedy_bayesian_2001} in~\Cref{sec:MFU_representations}. 
In~\Cref{sec:MFU_calibration} we discuss how data can be used to inform such representations through a hierarchical framework for calibration.
Finally, \Cref{sec:grouped_indices} provides the notation and formulation of the grouped Sobol' indices used to compute the sensitivity of the modeling assumption.

\subsection{Model-form uncertainty representations}\label{sec:MFU_representations}
\firstline{Model-form uncertainty (MFU) representations provided parameterized, embedded modifications to mathematical models that allow one to characterize uncertainty in modeling assumptions.} 
First introduced in~\cite{oliver_validating_2015} and applied in, e.g.,~\cite{morrison_representing_2018,portone_representing_2019,bandy_quantifying_2024,bandy_stochastic_2024,bandy_stochastic_2025}, MFU representations augment the governing equations~\cite{morrison_representing_2018,bandy_quantifying_2024,bandy_stochastic_2024,bandy_stochastic_2025} or replace an invalid assumption entirely~\cite{portone_representing_2019}.

Defining concepts mathematically, let a model governing the evolution of state variables $v$ and model output mappings be denoted
\begin{align}
        0 &= \mathcal{R}_a(v), \label{eq:governing_eq}\\ 
        d &= \mathcal{O}(v) + \epsilon_m, \quad \epsilon_m \sim \pi(\epsilon_m),\label{eq:obs_op} \\
        q &= \mathcal{Q}(v), \label{eq:qoi_op}
\end{align}
where~\eqref{eq:governing_eq} represents a set of governing equations in their residual form;
~\eqref{eq:obs_op} is the mapping of $v$ to observable data $d$ through an observation operator $\mathcal{O}$, with measurement error $\epsilon_m$ typically assumed to be a random variable with distribution $\pi(\epsilon_m)$;
and~\eqref{eq:qoi_op} is the mapping of $v$ to a predicted quantity of interest (QoI), which may not be observable (e.g., a prediction of the future). 
In~\eqref{eq:governing_eq}, $a$ is an aspect of the governing equations for which an assumption must be introduced, e.g.,~a coupling mechanism or the mathematical form for a constitutive equation. 
The assumption may not explicitly appear in the governing equations (e.g.,~neglecting a term), however, we explicitly denote its existence here to signify that a MFU representation will be introduced to characterize its impact on model predictions.

\firstline{In physics-based models, the governing equations $\mathcal{R}$ are generally highly trusted theory such as conservation of mass/momentum/energy.}
However, if the assumption is inaccurate, the governing equations will not provide a correct representation of reality; that is, the inaccurate assumption creates \textit{model-form error}.
The inaccuracy of the assumption will propagate through the governing equations to $v$ and will be observed as a discrepancy between the predicted observable output $\mathcal{O}(v)$ and $d$ that cannot be explained merely by the presence of measurement error, a phenomenon known as \textit{model discrepancy}.
Mathematically, this amounts to the difference between observed data and model output not obeying the assumed probability distribution of the measurement error:
\begin{equation*}
    d-\mathcal{O}(v) \not\sim \pi(\epsilon_m).
\end{equation*}
Calibration of model parameters without accounting for model-form error can lead to biased, overconfident uncertainties, resulting in flawed representations of predictive uncertainties.
Example~\ref{ex:poly} illustrates this phenomenon for a simple problem.
\begin{examplebox}
\label{ex:poly}
Consider we have a ``true'' data-generating process given by
\begin{align}
  \label{eq:f_data}
  f_{true}(x) = c_0 + c_1x + 0.1(x^2 + x^3), \quad x \in [0,2], \quad c_0=c_1=1.
\end{align}
Measurement data is collected at $N_d=100$ equally-spaced points in $[0,2]$ with measurement uncertainty $\epsilon_i$:
\begin{align}
  d_i = f_{true}(x_i) + \epsilon_i, \quad \epsilon_i \sim \pi_\epsilon(\epsilon_i) = \mathcal{N}(0, (0.05)^2), \quad i=1, \ldots, N_d.
\end{align}

However, without knowledge of this true model, one may assume a linear dependence on $x$:
\begin{align}\label{eq:assum_f}
  f(x;c_0, c_1) = c_0 + c_1 x,
\end{align}
with uncertain parameters $c_0$ and $c_1$.
This assumed linearity is the type of assumption denoted by the $a$ in~\eqref{eq:governing_eq}.

\vspace{1em}
We wish to use Bayesian inference to inform these uncertain parameters from data $\mathbf{d} = d_1, \dots, d_{N_d}$.
Given the prior belief that $c_0 \sim \Ucal[0,2]$ and $c_1 \sim \Ucal[0,5]$ and 
likelihood density,
\begin{align}
    p(\mathbf{d} | c_0, c_1 ) = \pi_\epsilon\left( \mathbf{d} - f(\mathbf{x};c_0, c_1) \right),
\end{align}
which provides the probability that the data arose from the model for given parameter values $c_0$ and $c_1$, the Bayesian posterior density is:
\begin{align}
    p(c_0, c_1 | \mathbf{d} ) = \frac{p(\mathbf{d}|c_0,c_1)p(c_0)p(c_1)}{p(\mathbf{d})}.
\end{align}
The posterior density can be approximated numerically using algorithms such as Markov Chain Monte Carlo (MCMC). 

\vspace{1em}
Bayesian calibration of $c_0$ and $c_1$ yields the marginal posteriors depicted in~\Cref{fig:inad_post_vs_pri}.
Here, we see that the model-form error in $f(x)$ from the neglected nonlinearity causes the posterior to concentrate around incorrect values for $c_0$ and $c_1$, which for many problems are the maximum likelihood estimate (MLE) values that minimize misfit with the data~\cite{bjk_kleijn_bernstein-von-mises_2012}.
\begin{center}
  \includegraphics{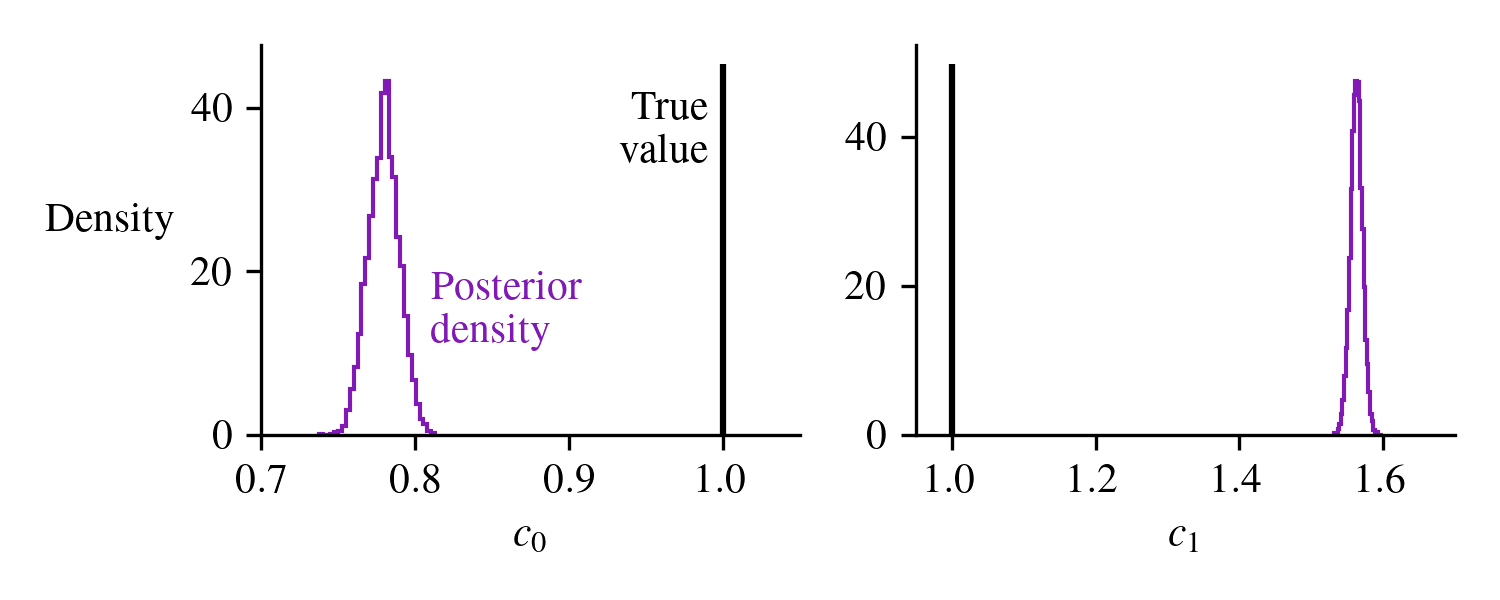}\vspace{-1em}
  \captionof{figure}{Posterior density histograms for $c_0$ and $c_1$, compared to their true values.}
  \label{fig:inad_post_vs_pri}
\end{center}
\Cref{fig:inad_predictive} shows prior and posterior predictive distributions, which represent the distribution of future unseen data accounting for model (e.g., parameter) and measurement uncertainties. 
A sample from a predictive distribution is generated by generating a sample of input uncertainties, evaluating the model for that sample, and perturbing the evaluation with a sample of the measurement error.
\Cref{fig:inad_predictive} illustrates how utilizing an inadequate model in the Bayesian inverse problem results in a posterior predictive distribution whose 95\% confidence intervals fail to encompass the data. 
Despite the fact that extrapolation to larger $x$ values further increases the discrepancy between the true phenomenon and the posterior predictive distribution, the posterior uncertainty bounds would not indicate any increased uncertainty as a result of this model-form error.
\begin{center}
  \includegraphics{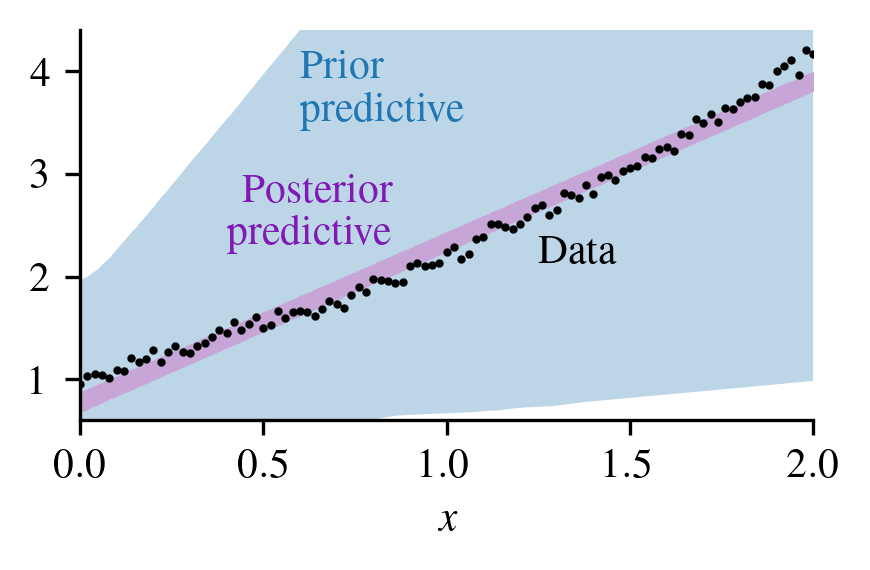}
  \includegraphics{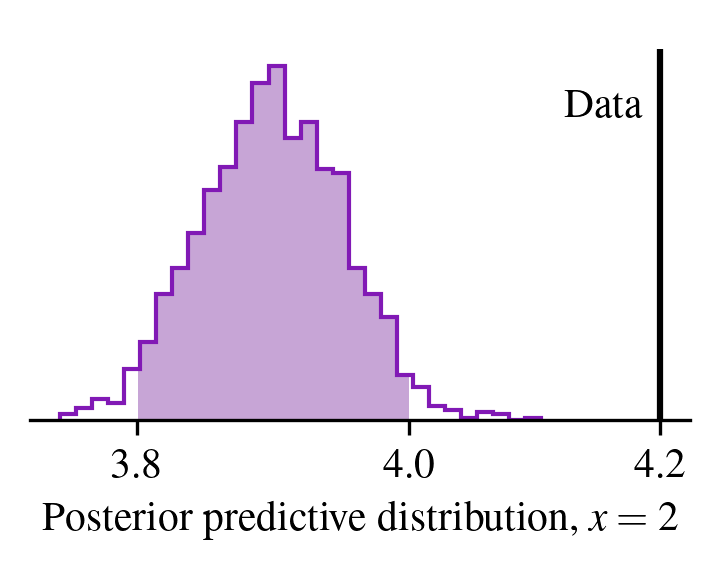}\vspace{-1em}
  \captionof{figure}{Left: 95\% probability mass intervals based on 0.025 and 0.975 quantiles for the prior and posterior predictive distributions corresponding to the mathematical model given in~\eqref{eq:assum_f} compared to calibration data. Right: Data vs.~posterior predictive distribution at $x=2$ with 95\% probability mass interval shaded in.}
  \label{fig:inad_predictive}
\end{center}
\vspace{1em}
\end{examplebox}

\firstline{Approaches addressing model discrepancy, based off the seminal paper~\cite{kennedy_bayesian_2001}, append a statistical model (usually a Gaussian process) $\delta_\phi$ to the observable model output and calibrate its hyperparameters $\phi$ to minimize misfit with the data:}
\begin{align}
    0 &= \mathcal{R}_a(v),\\ 
    d &= \mathcal{O}(v) + \epsilon_m + \delta_\phi, \quad \epsilon_m \sim \pi(\epsilon_m),\label{eq:ko} \\
    q &= \mathcal{Q}(v).
\end{align} 
While this approach is attractive as it is completely nonintrusive to the model, it is primarily useful in cases where the prediction quantity of interest is an observable output and predictions fall within the range of available calibration data, so that the statistical model is interpolating and not extrapolating~\cite{maupin_model_2020}.
Even interpolation is not guaranteed to be accurate, however~\cite{brynjarsdottir_learning_2014}. 
Since the correction to the model output only appears in~\eqref{eq:ko}, there is no way to propagate the discrepancy representation to an unobservable QoI (e.g., quantities representing the interior of modeled systems where sensors cannot be placed). 
Furthermore, even if prediction QoIs are observable outputs and a discrepancy representation can be calibrated, because it is purely statistical, its accuracy will be limited outside the regime where data is available.

\firstline{The work proposed herein is motivated by the fact that prediction QoIs are often unobservable, meaning predictions must extrapolate beyond available data.}
In such cases, only embedded MFU representations, which address the model-form error at its source, can propagate to the QoIs. 
To this end, an MFU representation $\xi_\gamma$ replaces or augments $a$ to represent uncertainty in the assumption, thereby modifying the governing equations:
\begin{align}
    0 &= \mathcal{R}_{\xi_\gamma}(v), \label{eq:mfu_gov}\\ 
    d &= \mathcal{O}(v(\xi_\gamma)) + \epsilon_m, \quad \epsilon_m \sim \pi(\epsilon_m), \label{eq:mfu_obs}\\
    q &= \mathcal{Q}(v(\xi_\gamma)). \label{eq:mfu_qoi}
\end{align}
Here $\gamma$ denotes the parameters of the MFU representation, and we denote the state variables $v(\xi_\gamma)$ in \eqref{eq:mfu_obs} and~\eqref{eq:mfu_qoi} to emphasize the fact that $\xi_\gamma$'s inclusion in~\eqref{eq:mfu_gov} propagates through $v$ to the observable outputs $\mathcal{O}(v)$ and prediction QoIs $\mathcal{Q}(v)$. 
Hence, any information gained by calibration against observable data will be propagated to unobservable prediction QoIs.

\firstline{MFU representations are developed with prediction in mind.}
Since the MFU representation $\xi_\gamma$ is embedded in the governing equations and directly characterizes uncertainty in a modeling assumption, it can be constrained to respect physical properties of the modeled system and reflect what is known about the modeled phenomenon.
Because more intuition can be brought to bear by directly characterizing uncertainty in the assumption, MFU representations can be developed without access to data, which can be especially useful early in the model development process.
Furthermore, uncertainty in modeling assumptions can be encoded through the probability distributions assigned to the MFU parameters, which are defined to reflect what is known about plausible behavior for the modeled phenomenon.

\firstline{Careful treatment of uncertainties is needed when data are available to inform the MFU representation.}
MFU representations must be able to represent uncertainty arising from the fact that, often, no single mathematical model can perfectly capture a complex phenomenon that in reality depends upon quantities that aren't resolved in the model (for example, neglected species in a chemistry model~\cite{morrison_representing_2018} or a population model~\cite{bandy_stochastic_2024}). 
No amount or quality of data can overcome such a structural inadequacy in the model; the goal must then be to use data to inform the appropriate degree of uncertainty in the model's form.

\subsection{Calibration of model-form uncertainty representations}\label{sec:MFU_calibration}

\firstline{Although data is not required to develop an MFU representation, in practice calibration data is often used to inform uncertainty in the MFU parameters.}
Standard Bayesian paradigms for estimating uncertainties are often unsuitable for calibrating MFU parameterizations;
in the limit of infinite data, the Bayesian posterior contracts about parameter values that best agree with the data~\cite{bjk_kleijn_bernstein-von-mises_2012}, implying that uncertainty vanishes. 
In practice, such posterior contraction could allow one to grow increasingly confident in an incorrect mathematical model form.
Thus, when informing MFU representations with data, one should employ an inference paradigm capable of avoiding posterior contraction when model-form error is present.

\firstline{
Hyperparameterization is a common approach to prevent underpredicting uncertainty in MFU characterizations.
}
One approach to hyperparameterization, presented in~\cite{sargsyan_embedded_2019}, augments model parameters $\theta$ with ``discrepancy'' random variables, which are stochastic: $\theta + \delta, \; \delta \sim \pi_\delta(\cdot;\phi)$.
The joint distribution of parameters and hyperparameters $\pi(\theta,\phi)$ is then calibrated. 
Another common approach, as presented in~\cite{oliver_validating_2015,sargsyan_statistical_2015,bandy_stochastic_2024}, leverages hierarchical representations,
where the MFU parameters are random variables described by parametric probability density functions (PDFs) whose hyperparameters are also uncertain (i.e., have corresponding PDFs).
For example, if an MFU parameter $\gamma$ is assumed to be normally distributed, its mean and standard deviation are considered uncertain with assumed PDFs as well.
Denoting such hyperparameters as $\phi$, the joint distribution of parameters and hyperparameters are defined hierarchically as 
\begin{align}
    \pi(\gamma,\phi) = \pi(\gamma|\phi)\pi(\phi).\label{eq:hierarchical_distribution}
\end{align}

\firstline{When considering hierarchical approaches to inform MFU representations, calibration using a fully Bayesian treatment of uncertainty leverages observational data $d$ to estimate a posterior distribution for $\phi$ alone~\cite{oliver_validating_2015,morrison_representing_2018,bandy_quantifying_2024,bandy_stochastic_2025,bandy_stochastic_2024}:}
\begin{align}
    \pi(\phi|d) = \frac{\pi(d|\phi) \pi_0(\phi)}{\pi(d)}.
    \label{eq:hyperparameter_posterior}
\end{align}
An analytical expression for the likelihood with respect to hyperparameters $\pi(d|\phi)$ is not typically available, so practical implementations perform a joint Bayesian calibration of parameters and hyperparameters:
\begin{align*}
    \pi(\gamma,\phi|d) = \frac{\pi(d|\gamma,\phi) \pi_0(\gamma,\phi)}{\pi(d)} 
    =\frac{\pi(d|\gamma) \pi(\gamma|\phi)\pi(\phi)}{\pi(d)},
\end{align*}
where the model parameters are marginalized out in postprocessing.

Because the hyperparameters don't appear directly in the model, forward propagation is carried out by sampling the hierarchical distribution~\eqref{eq:hierarchical_distribution} to generate samples of the MFU parameters.
The prior pushforward of hyperparameters to MFU parameters uses the hyperprior, $\pi_0(\phi)$.
The posterior pushforward of hyperparameters to MFU parameters uses the posterior distribution in hyperparameters, $\pi(\phi|d)$.
An example of the hierarchical Bayesian approach to inform MFU is presented in Example~\ref{ex:hier}.

\begin{examplebox}
\label{ex:hier}

The discrepancy between calibration data and the posterior predictive uncertainty bounds in~\Cref{fig:inad_predictive} indicate model-form error is present. 
We may suspect that the data generating process is weakly nonlinear, but we don't know the exact form of the nonlinearity.
In this case an MFU representation of the following form could be proposed:
\begin{align}
  \tilde{a}_m(c_2, \alpha) = c_2 x^\alpha,
  \label{eq:poly_mfu}
\end{align}
yielding an enriched model
\begin{align}
  \label{eq:f_enriched}
  \ftilde(x) = c_0 + c_1x + c_2 x^\alpha
\end{align}
with uncertain model parameters $[c_0, c_1]$ and MFU parameters$[c_2, \alpha]$. 
We expect~\eqref{eq:poly_mfu} does not perfectly capture the true nature of the nonlinearity, and thus, the standard Bayesian posterior predictive may not encompass the calibration data.
We therefore leverage a hierarchical approach, wherein we hyperparameterize the problem in terms of $\muc, \sigc, \mua$, and $\siga$ and instead pose the Bayesian inverse problem with respect to model parameters $[c_0,c_1]$ and MFU hyperparameters $[\muc, \sigc, \mua, \siga]$.
Since, $c_2 > 0$ and $\alpha > 1$, we pose prior distributions that enforce these constraints:
\begin{align*}
  \log(c_2) & \sim \Ncal(\muc, \sigc^2), \\
  \log(\alpha-1) & \sim \Ncal(\mua, \siga^2).  
\end{align*}
Since we suspect a weak nonlinearity, we do not expect large values for $c_2$ and $\alpha$. 
Therefore, we define hyperpriors such that the supports for marginal prior distributions of $c_2$ and $\alpha$ fall mostly within $[0,1]$ and $[1,4]$, respectively:
\begin{eqnarray*}
  \begin{aligned}
  \muc & \sim \Ncal(-1, 0.5^2),\\ 
  \sigc & \sim \Ucal[0,0.1],
  \end{aligned}
  \hspace{0.5in}
  \begin{aligned}
    \mua & \sim \Ncal(0, 0.5^2),\\ 
    \siga & \sim \Ucal[0,0.1].
    \end{aligned}
\end{eqnarray*}

\Cref{fig:poly_with_mfu_prior_vs_post} depicts the marginal posterior distributions for $c_0$, $c_1$; by incorporating the MFU representation and leveraging hierarchical inference, the posterior marginals for these model parameters now encompass their true values.
\begin{center}
  \includegraphics{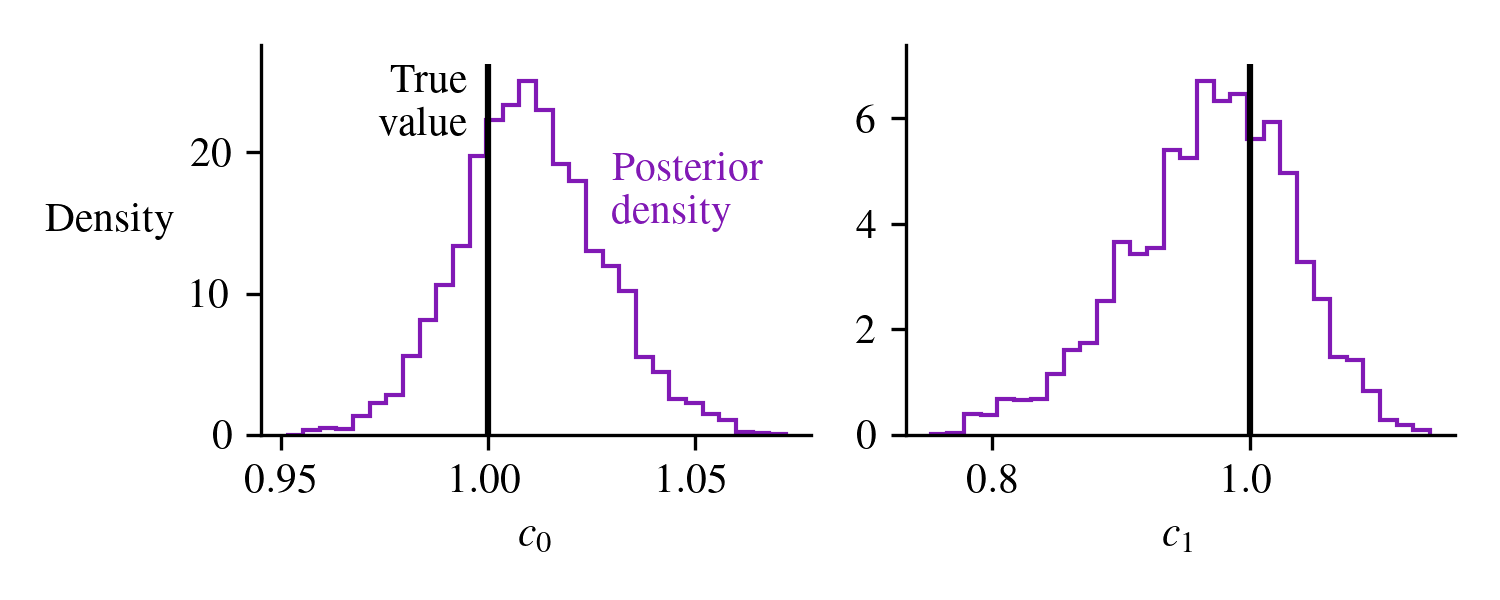}\vspace{-1em}
  \captionof{figure}{Posterior density histograms for $c_0$ and $c_1$ compared to their true values.}
  \label{fig:poly_with_mfu_prior_vs_post}
\end{center}
Furthermore, we see in~\Cref{fig:poly_with_mfu_predictives} that by adequately quantifying the MFU arising from the weak nonlinearity, the resulting 95\% confidence intervals for the posterior predictive now encompass the data.
\begin{center}
  \includegraphics{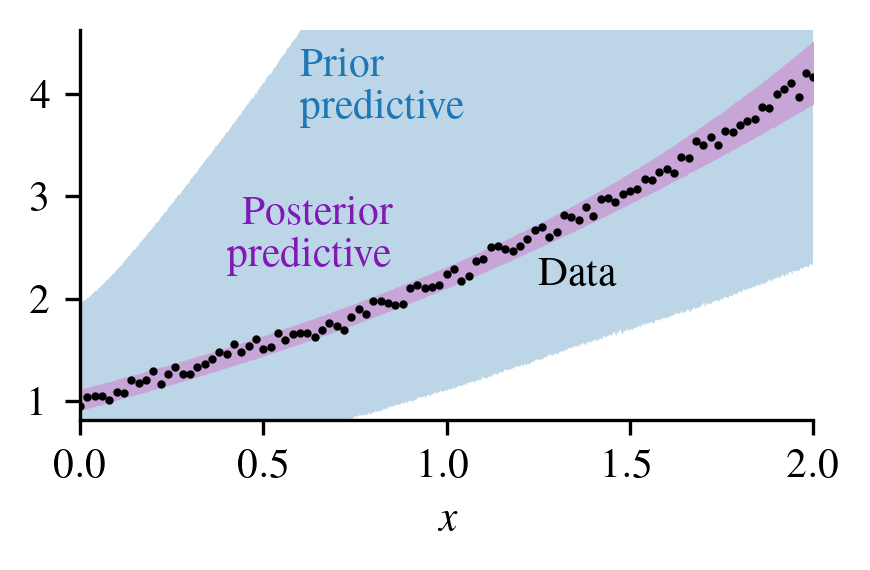}
    \includegraphics{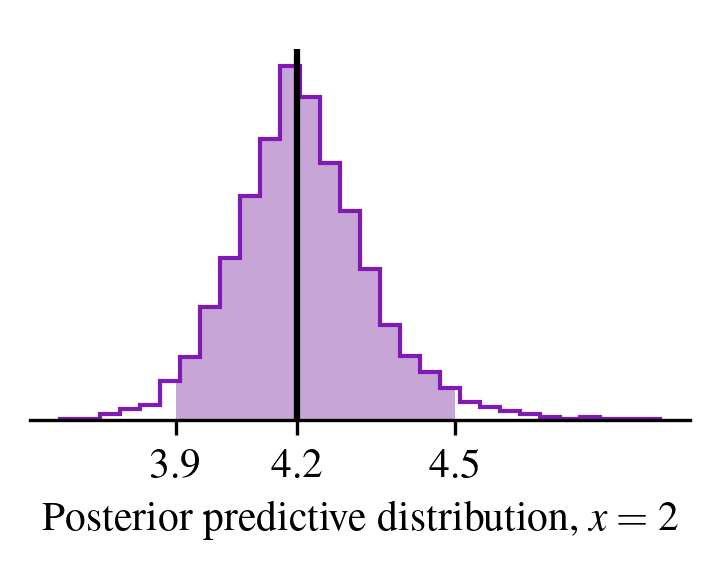}
  \vspace{-1em}
  \captionof{figure}{Left: 95\% probability mass intervals based on 0.025 and 0.975 quantiles for the prior and posterior predictive distributions corresponding to the enriched mathematical model given in~\eqref{eq:f_enriched}  compared to calibration data. Right: Data vs.~posterior predictive distribution at $x=2$ with 95\% probability mass interval shaded in.}
  \label{fig:poly_with_mfu_predictives}
\end{center}
As compared to the posterior and posterior-predictive distributions attained when calibrating with an inadequate model,~\Cref{fig:poly_with_mfu_prior_vs_post,fig:poly_with_mfu_predictives} are less biased and better capture the calibration data within the confidence bounds of the posterior predictive distribution.
This illustrative example demonstrates that inclusion of a hierarchically calibrated MFU representation can improve posterior accuracy and avoid overconfidence in posterior predictive distributions.
\end{examplebox}
\subsection{Grouped Sobol' indices}
\label{sec:grouped_indices}

\firstline{Grouped Sobol' indices provide a measure of importance of groups (i.e., subsets) of model parameters to prediction quantities of interest (QoIs).} 
Specifically, Sobol' indices are a global sensitivity analysis method that quantify the fraction of variance in a model output that can be attributed to an uncertain model input (or group of inputs).
Sobol' indices were originally defined in the context of the functional analysis of variance (ANOVA) decomposition~\cite{sobol_1993}, which
we do not review here for the sake of brevity.
Extensive discussion regarding Sobol' indices and their computation can be found in, e.g.,~\cite{sobol_1993, sobol_2001, saltelli_2010, prieur_2017}.
Here we define grouped Sobol' indices and methods for their computation.

First we define the mathematical notation for the remainder of the paper.
Let $\inpRV_i$ be a random variable with support $\Omega_i \subset \mathbb{R}$, for $i = 1, 2, \dots, \inpdim$. 
Consider $f \in  \mathcal{L}^2$ where $f: \Omega \to \mathbb{R}$ for $\Omega = \Omega_1 \times \Omega_2 \times \dots \times \Omega_{\inpdim}$.
We denote the indices for parameter group $\ug = \{i_1, i_2, \dots, i_k\} \subset \{1, 2, \dots, \inpdim\}$ and the group of inputs $\binpRV_{\ug} = \left( \inpRV_{i_1}, \inpRV_{i_2}, \dots, \inpRV_{i_k}\right)$.
We denote the complement of the group $\binpRV_{\sim \ug} = \binpRV_{ \{1,2,\dots,\inpdim \} \setminus \{i_1, i_2, \dots, i_k\}}$.
The joint probability density for $\X_{\bm u}$ is denoted $\pi(\x_{\bm u})$ (for notational simplicity, we omit the distribution subscripts in $\pi_{\bm u}(\x_{\bm u})$).

The parameters within a given group may be statistically dependent on each other. 
However, groups are independent of one another and other inputs. 
Hence, the joint probability density for $\X$ can be written as the product of each group, e.g., $\pi(\x) = \pi(\x_{\bm u})\pi(\x_{\sim \ug})$. 
As is common, we assume that the parameters are statistically independent to facilitate the derivation of the Sobol' indices. 
However, we will relax this assumption in~\Cref{sec:dependent_groups_SA} to accommodate dependencies that arise as a result of calibration.

The \textit{grouped first-order Sobol' index} (or \textit{grouped main effect index}) for parameter group $\ub$ is defined as
\begin{align}
    S_{\ub}^{g} = \frac{\Var_{\X_\ub}\left(\E_{\X_{\sim \ub}}\left[f(\X)|\X_\ub\right] \right)}{\Var(f(\X))}.
    \label{eq:grouped_main_effect}
\end{align}
This grouped index is equivalent to the \textit{closed Sobol' index} (often denoted $S_\ub^{\text{clo}}$) and 
measures the fraction of variance in $f$ that can be attributed to each parameter in the group $\ub$ and to the interactions between the parameters in the group. 
Note that if the cardinality of the group is $1$, \eqref{eq:grouped_main_effect} reduces to the traditional first-order Sobol' index for a single input.

The \textit{grouped total Sobol' index} (or \textit{grouped total effect index}) for the group of parameters $\ub$ is defined as
\begin{align}
    T_\ub = \frac{\E_{\X_{\sim \ub}}\left[\Var_{\X_\ub}(f(\X)|\X_{\sim \ub})\right]}{\Var(f(\X))}
    = 1 - \frac{\Var_{\X_{\sim \ub}} \left( \E_{\X_{\ub}} \left[ f(\X) | \X_{\sim \ub}\right] \right) }{\Var(f(\X))}.
   \label{eq:grouped_total_effect}
\end{align}
The total Sobol' index represents the contribution to the total variance due to $\X_{\ub}$, including interactions with other model parameters.
As with the grouped first-order Sobol' index, if the cardinality of group $\ub$ is $1$,~\eqref{eq:grouped_total_effect} reduces to the traditional total Sobol' index for a single input.

\subsubsection{Computation of grouped Sobol' indices}
\label{sec:comp_sobol_indices}

\firstline{We adopt the most common approach to compute Sobol' indices, the \textit{pick-freeze} method.}
We employ the estimators presented in~\cite{prieur_2017,Owen_2013} for the main and total effect indices defined in~\Cref{sec:grouped_indices}. 
Derivations for the estimator of the numerator in the grouped first-order index can be found in~\cite{prieur_2017}, while Appendix A presents a similar derivation corresponding to the grouped total index.
Let $\pi({\bm x})$ be the probability density function (PDF) associated with $\binpRV$, and let ${\bm x},{\bm x}' \sim \pi$ denote replicate random vectors whose PDF is $\pi$, i.e., $\pi({\bm x}) \equiv \pi({\bm x}')$.
The numerator for the grouped first-order Sobol' index is computed as
\begin{align}
    \Var_{\X_\ub}\left(\E_{\X_{\sim \ub}}\left[f(\X)|\X_\ub\right]\right) &\approx
    \frac{1}{N}\sum_{i=1}^N f(\x^{(i)})\left( f(\x'^{(i)}_{\sim \ub}, \x^{(i)}_\ub) - f(\x'^{(i)}) \right), \quad \x^{(i)}, \x'^{(i)} \sim \pi(\x).
\end{align}
The numerator of the grouped total Sobol' index is computed as 
\begin{align}
    \E_{\binpRV_{\sim \ug}} \left[ \V_{\binpRV_{\ug}} \left( f\left(\binpRV \right) | \binpRV_{\sim \ug} \right) \right] \approx
\frac{1}{2N} \sum_{i=1}^N \left( f(\x^{(i)}) - f(\x^{(i)}_{\sim \ub},\x'^{(i)}_\ub) \right)^2, \quad \x^{(i)}, \x'^{(i)} \sim \pi(\x).
\end{align}
Finally, the total variance appearing in the denominators of the indices is computed as
\begin{align}
    V(f(\X)) &\approx \frac{1}{2N}\sum_{i=1}^N \left[\left(f(\x^{(i)}) - \hat{\mu}_f \right)^2 + \left(f(\x'^{(i)}) - \hat{\mu}'_f \right)^2\right], \\
    \hat{\mu}_f &\approx \frac{1}{N}\sum_{i=1}^N f(\x^{(i)}), \quad
    \hat{\mu}'_f \approx \frac{1}{N}\sum_{i=1}^N f(\x'^{(i)}). \nonumber
\end{align}

\subsubsection{Treatment of hierarchical distributions}\label{sec:hier_sobol}
\firstline{It is not immediately obvious how to appropriately incorporate hierarchical uncertainty representations for MFU parameters (as defined in~\eqref{eq:hierarchical_distribution}) into global sensitivity analysis.}
Previous works have considered differing interpretations of Sobol' indices in the context of hierarchical uncertainty characterizations.
The work \cite{mandel_randomized_2018}
formulates \textit{randomized} Sobol indices, which compute an average (with respect to the hyperparameter distribution) Sobol' index.
This approach is based on the interpretation that model parameter uncertainty is reducible (epistemic), while randomness in the hyperparameters reflects variability across experiments (irreducible).
Thus, the authors intend for the standard Sobol' index to reflect sensitivity for a particular experiment, with the \textit{randomized} indices representing aggregate sensitivity over all experiments.
Similarly, the work of~\cite{krzykacz-hausmann_approximate_2006} proposes that for computer models with both reducible and irreducible sources of uncertainty, one should compute an expected Sobol' index; however, alternative to~\cite{mandel_randomized_2018}, the expectation is with respect to the epistemic variables.

\firstline{The interpretation of Sobol' indices in both~\cite{mandel_randomized_2018, krzykacz-hausmann_approximate_2006} fundamentally differs from that underpinning our methodology.}
Here we wish to characterize the aggregate influence of MFU parameters and hyperparameters on model outputs.
We thus define the MFU group in terms of the MFU parameters and their hyperparameters: $\X_{\ug} = (\theta, \phi)$.
All estimators presented in~\Cref{sec:comp_sobol_indices} and derived in~\Cref{sec:Appendix_A} hold for samples drawn from a hierarchical posterior since the estimators only require sampling from the joint distribution over the group and make no assumptions about the structure of the distribution.
This choice to represent the joint parameters and hyperparameters as a group ensures the resulting Sobol' index reflects uncertainty associated with the MFU representation as a whole.

\subsubsection{Treatment of dependent groups and inputs}\label{sec:dependent_groups_SA}
\firstline{In practice, the estimation of the MFU parameters occurs alongside other model parameters.}
Thus, the group of MFU parameters may be correlated to the other model parameters in the posterior distribution. 
This violates the assumption of statistical independence between the groups. Generalizations of Sobol' indices in the presence of dependent inputs are considered in~\cite{10.1214/12-EJS749,MARA2012115,KUCHERENKO2012937,RABITZ20107587}.
However, the traditional interpretation of Sobol' indices attributing a relative variance contributing does not apply when the inputs are dependent, e.g., when they are correlated.
Various alternative global sensitivity analysis approaches have been considered~\cite{ZHOU20144885,Iooss_2019,DaVeiga03052015}.

In this article, we focus on the total Sobol' indices thanks to their alternative approximation theoretic interpretation that generalizes to the case with dependent inputs~\cite{Hart_2018}. 
Specifically, the total Sobol' index for the group of inputs $\X_{\ug}$ corresponds to the relative approximation error if $f$ is optimally approximated (in the $L^2$ sense) by a function that only depends on $\X_{\sim \ug}$. 
From this perspective, a small total index $T_{\ug}$ implies that $f$ can be approximated well without information from $\X_{\ug}$, and hence $\X_{\ug}$ is not influential. 

The presence of input correlations typically results in smaller total indices because some information about parameter variability may be captured through correlations with other parameters.
In the context of comparing two groups, model parameters and MFU parameters, a larger total index implies that information about variability from one group cannot be captured through correlations with the other group, and hence contributes more significantly to uncertainty in the function's output. 
Therefore, if there are only two groups of parameters, it is possible to reason about the relative importance of the groups based on their total effects indices.
However, if there are more than two groups of input parameters, this reasoning breaks down, since it is possible that a group has a large Sobol' index because it is strongly correlated with a group that is highly influential, rather than because it is influential in its own right. 

It would not be meaningful to directly compare the magnitudes of total effects indices between prior and posterior distributions---the dependence structure between the input groups has changed, and the posterior predictive variance is typically smaller than the prior predictive variance. 
However, total effect indices' numerators can be meaningfully compared since the indices for prior and posterior both correspond to absolute function approximation errors with the same units. 
The reduction in magnitude from prior to posterior indicates the decrease in variability due to calibration. 
By comparing the  total effect index numerators for model parameters and MFU parameters, we may understand how the relative importance of the two groups changed from prior to posterior. 
This is only possible, however, because we have split the parameters into two groups, so that there is no confounding dependence on a third group of parameters.
~\hyperref[ex:sobol]{Example~2.3} demonstrates computation and interpretation of these indices for the MFU representation introduced in Example~\ref{ex:hier}.

\begin{examplebox}
\label{ex:sobol}
The model and MFU parameters exhibit significant dependence after Bayesian calibration, as shown in~\Cref{fig:polynomial_posterior_ccs}.

\begin{center}
  \includegraphics{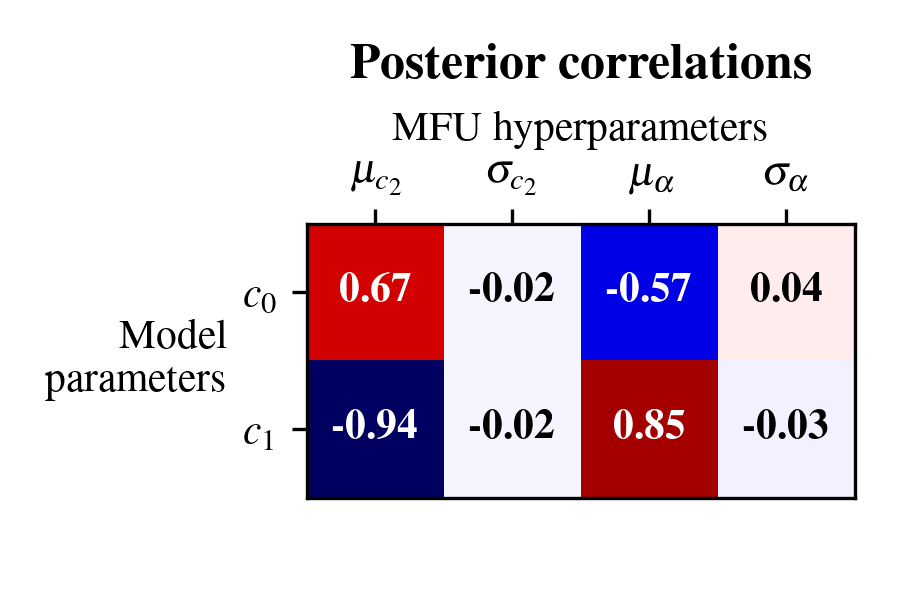}\vspace{-1em}
  \captionof{figure}{Correlation coefficients between model parameters (rows) and MFU hyperparameters (columns) after Bayesian calibration.}
  \label{fig:polynomial_posterior_ccs}
\end{center}

Nevertheless, employing the previously discussed interpretation, we can compute the total effects indices for the model and MFU parameter groups. 
\Cref{fig:polynomial_sobols} illustrates the total indices for both for the prior distribution and the posterior distribution, where one can see that the total effect index for MFU parameters exceeds 1, which would not be possible if the parameter groups were independent. 

\begin{center}
  \includegraphics{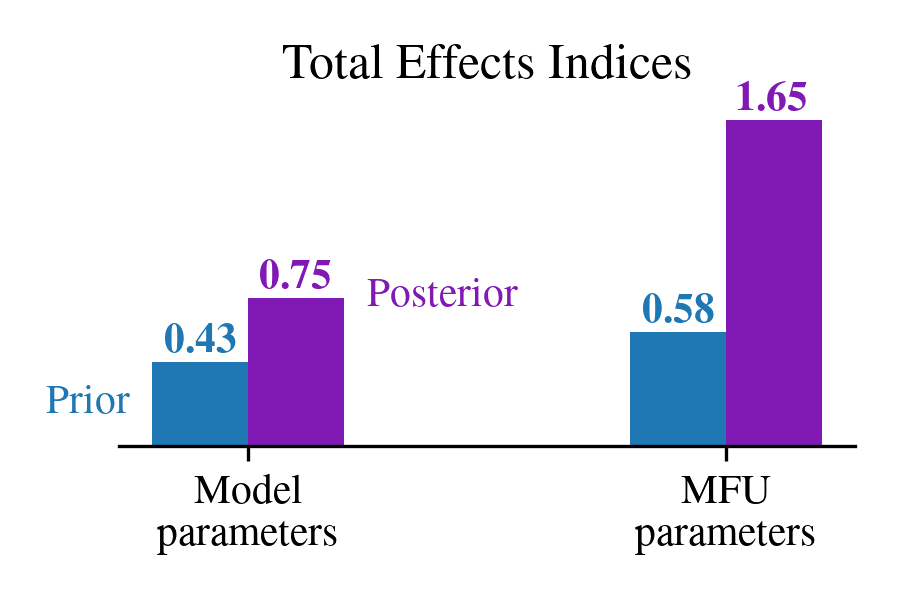}\vspace{-1em}
  \captionof{figure}{Total effects indices of $\widetilde{f}(2)$ for input distributions prior to calibration (blue) and after calibration (purple). }
  \label{fig:polynomial_sobols}
\end{center}

As discussed previously, it can be more meaningful to compare the total index numerators when comparing prior and posterior sensitivities, which are shown in~\Cref{fig:polynomial_numerators_x} before and after Bayesian calibration.
We observe that prior to calibration, MFU parameters aren't influential to the output except for the largest $x$ values.
After calibration, they are equally influential to, or greater than, the influence of model parameters for all $x$ values.
Additionally, we observe that the magnitude of the numerator has reduced significantly after calibration due to the overall reduction in output variance post-calibration.
This illustrative example shows how Bayesian calibration can alter the relative importance of parameters in the model.

\begin{center}
  \includegraphics{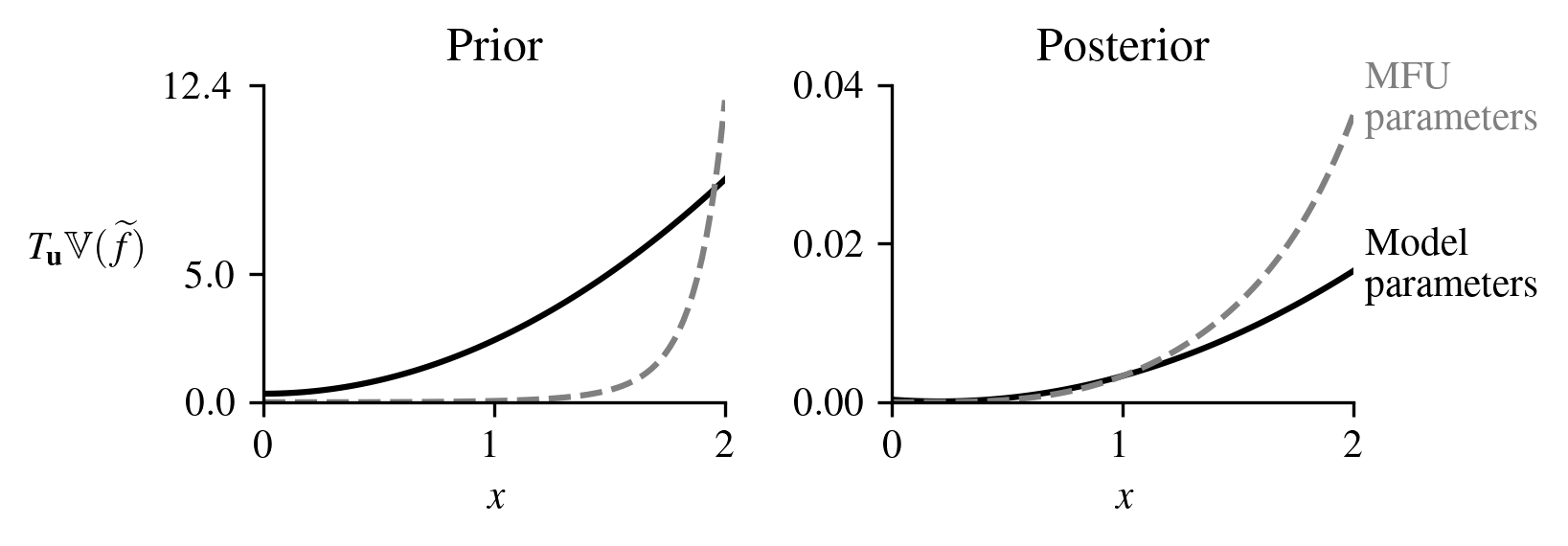}\vspace{-1em}
  \captionof{figure}{Numerators of total effects indices of $\widetilde{f}(x)$ for model parameters (solid) and MFU parameters (dashed), for prior (left) and posterior (right) distributions.}
  \label{fig:polynomial_numerators_x}
\end{center}

\end{examplebox}

\section{Results} \label{sec:results}

\firstline{Here we present theoretical and numerical results.}
In~\Cref{sec:robustness} we provide theoretical proof of robustness of grouped sensitivity analysis to MFU parameterization.
In~\Cref{sec:cont_transport} we present numerical results demonstrating our method and corroborating our theoretical result. 
Our numerical studies are carried out in the context of a model for contaminant transport through a heterogeneous porous medium.
Code to reproduce all numerical experiments in this section is publicly available at \url{https://github.com/sandialabs/MFUQ-MFU-GSA}.

\subsection{Theoretical result: robustness of grouped sensitivity analysis to MFU parameterization}
\label{sec:robustness}
\firstline{Since multiple MFU representations may be valid in a modeling scenario,
it is critical to understand how differences in MFU representation choice impact resulting Sobol' index computation.}
Here, we provide the assumptions under which the differences in Sobol' indices for varying MFU representations will be bounded. 

Let us consider two mathematical models, $f:\Omega_{\X} \to \mathbb{R}$ and $q:\Omega_{\Xtilde} \to \mathbb{R}$, whose difference lies in their MFU representation.
The inputs for $f$ are given as $\X = \{\X_\ub, \X_\vb\}$, where $\X_\ub$ represents the MFU parameters. 
Similarly, the inputs for $q$ are given as $\Xtilde = \{\Xtilde_\ubtilde, \X_\vb\}$, where $\Xtilde_\ubtilde$ represents the alternative MFU parameters. 
Note that parameters other than MFU parameters are the same between both models and given as $\X_\vb$.
The Sobol' indices for groups $\ub$ and $\ubtilde$ are 
\begin{align*}
    S_\ub^g = \frac{\Var_{\Xu}\left(\E_{\X_\vb}[f(\X)|\Xu]\right)}{\Var(f(\X))}
    \quad \quad
    S_\ubtilde^g = \frac{\Var_{\Xtilde_\ubtilde}\left( \E_{\X_\vb}[q(\Xtilde) | \Xtilde_\ubtilde]\right)}{\Var\left(q(\Xtilde)\right)},
\end{align*}
and the grouped total Sobol' indices are
\begin{align*}
    T_\ub = \frac{\E_{\Xu}\left[\Var_{\X_\vb}\left(f(\X)|\Xu\right)\right]}{\Var(f(\X))}
    \quad \quad
    T_\ubtilde = \frac{\E_{\Xtilde_\ubtilde}\left[ \Var_{\X_\vb}\left(q(\Xtilde) | \Xtilde_\ubtilde\right)\right]}{\Var\left(q(\Xtilde)\right)}.
\end{align*}
We assume statistical independence between parameter groups, i.e.,~$\pi(\X_\ub, \X_\vb)=\pi(\X_\ub)\pi(\X_\vb)$.
\begin{prop}\label{prop:bounds}
\normalfont
If 
\begin{enumerate}
    \item $\norm{\Var_{\X_\ub} \left( f(\X)| \X_\vb \right) - \Var_{\Xtilde_{\ubtilde}} \left( q(\Xtilde)| \X_\vb \right) }_{\mathcal{L}_1} < \errone$. 
    \item $| \V(f(\X)) - \V(q(\Xtilde))| < \errtwo$. 
    \item $\V(f(\X)) = 1$. 
\end{enumerate}
Then 
$\big| S_\ub^g - S_\ubtilde^g \big| < \errone + 2\errtwo$.
\end{prop}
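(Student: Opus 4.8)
The plan is to express both indices as ratios and reduce the claim to two separate estimates: a bound on the difference of the numerators, obtained from Hypothesis 1, and a bound on the effect of perturbing the denominator, obtained from Hypotheses 2 and 3. Write $S_\ub^g = N/D$ and $S_\ubtilde^g = \widetilde{N}/\widetilde{D}$, where $D = \V(f(\X))$ and $\widetilde{D} = \V(q(\Xtilde))$ are the two total variances and $N$, $\widetilde{N}$ are the corresponding numerators. The conceptual core is that both models share the group $\X_\vb$ with a common distribution $\pi(\x_\vb)$; conditioning on $\X_\vb$ and integrating against this shared density is exactly the operation linking Hypothesis 1 to the numerators, since by the law of total variance each numerator is the $\X_\vb$-expectation of the conditional variance compared in Hypothesis 1, namely $N = \E_{\X_\vb}\big[\Var_{\X_\ub}(f(\X)\mid\X_\vb)\big]$ and $\widetilde{N} = \E_{\X_\vb}\big[\Var_{\Xtilde_\ubtilde}(q(\Xtilde)\mid\X_\vb)\big]$.

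First I would bound $|N - \widetilde{N}|$. Because the outer expectation is taken against the common density $\pi(\x_\vb)$, I can move it inside the absolute value and apply the triangle inequality, so that $|N - \widetilde{N}| \le \E_{\X_\vb}\big[\,\big|\Var_{\X_\ub}(f\mid\X_\vb) - \Var_{\Xtilde_\ubtilde}(q\mid\X_\vb)\big|\,\big]$, and the right-hand side is precisely the $\mathcal{L}_1$ norm appearing in Hypothesis 1. Hence $|N - \widetilde{N}| < \errone$. Hypothesis 3 then normalizes the first index, $D = 1$, so that $S_\ub^g = N$ and only the perturbation of $\widetilde{D}$ away from $1$ remains to be controlled.

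Next I would handle the denominator with a quotient-difference inequality. Writing $S_\ub^g - S_\ubtilde^g = N - \widetilde{N}/\widetilde{D}$ and adding and subtracting $\widetilde{N}$ gives $S_\ub^g - S_\ubtilde^g = (N - \widetilde{N}) + \widetilde{N}\,(1 - 1/\widetilde{D})$, so that $|S_\ub^g - S_\ubtilde^g| \le |N - \widetilde{N}| + (\widetilde{N}/\widetilde{D})\,|\widetilde{D} - 1|$. Here $|N - \widetilde{N}| < \errone$ from the previous step; the ratio $\widetilde{N}/\widetilde{D} = S_\ubtilde^g$ lies in $[0,1]$ because the groups are independent; and $|\widetilde{D} - 1| = |\widetilde{D} - D| < \errtwo$ by Hypotheses 2 and 3. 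Combining these yields $|S_\ub^g - S_\ubtilde^g| < \errone + \errtwo \le \errone + 2\errtwo$, which is the claimed bound (the factor of two being comfortable slack). Before this step I would record that Hypotheses 2 and 3 force $\widetilde{D} > 1 - \errtwo > 0$ for $\errtwo < 1$, guaranteeing the ratios are well defined.

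The main obstacle is not the quotient algebra but getting the numerator identification right: one must recognize that Hypothesis 1 conditions on the shared parameters $\X_\vb$, so that its $\mathcal{L}_1$ norm---an expectation over the common density $\pi(\x_\vb)$---is exactly the gap between the two index numerators via the law of total variance. This is why the hypothesis conditions on $\X_\vb$ rather than on the MFU groups $\X_\ub$ and $\Xtilde_\ubtilde$, which differ between the two models (and may even differ in dimension) and so could not be integrated against a common density. A secondary technical point, worth stating explicitly, is the use of the a priori bound $S_\ubtilde^g \le 1$ together with $\widetilde{D}$ bounded away from zero, which is what keeps the denominator perturbation an additive $O(\errtwo)$ term rather than something that could be amplified by a small denominator.
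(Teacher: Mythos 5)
Your handling of the denominator (add and subtract, the a priori bound $S_\ubtilde^g\le 1$, and $\V(f(\X))=1$) matches the paper's argument, but your numerator step contains a genuine error that changes which proposition you have proved. You identify the numerator of $S_\ub^g$ as $N=\E_{\X_\vb}\!\left[\Var_{\X_\ub}\!\left(f(\X)\mid\X_\vb\right)\right]$ and attribute this to the law of total variance. That identification is false: by \eqref{eq:grouped_main_effect} the numerator of the grouped first-order index is $\Var_{\X_\ub}\!\left(\E_{\X_\vb}\!\left[f(\X)\mid\X_\ub\right]\right)$, a variance of a conditional mean given $\X_\ub$, whereas $\E_{\X_\vb}\!\left[\Var_{\X_\ub}\!\left(f(\X)\mid\X_\vb\right)\right]$ is the numerator of the grouped \emph{total} index $T_\ub$ in \eqref{eq:grouped_total_effect}. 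The law of total variance conditioning on $\X_\vb$ gives $\E_{\X_\vb}[\Var(f\mid\X_\vb)]=\V(f)-\Var_{\X_\vb}(\E[f\mid\X_\vb])$; it connects Assumption (1) to the closed index of the complement group $\vb$ (equivalently to $1-T_\ub$), not to $\Var_{\X_\ub}(\E[f\mid\X_\ub])$. The two numerators differ by the interaction contribution between $\ub$ and $\vb$, which is nonnegative and is not controlled by the hypotheses, so the bound cannot simply be carried over. What you have written is, in substance, the paper's proof of Proposition~\ref{prop:bounds_2} for the total indices---and, tellingly, you arrive at that proposition's bound $\errone+\errtwo$ rather than the claimed $\errone+2\errtwo$.

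The ingredient you are missing is the step the paper isolates as Corollary~\ref{cor:2}: since Assumption (1) controls a conditional \emph{variance} given $\X_\vb$ while the first-order numerator is a \emph{variance of a conditional mean}, one must convert between the two via the law of total variance, and that substitution replaces the conditional-variance term by the total variance minus the variance-of-conditional-mean, costing an additional application of Assumption (2). This is exactly where the second $\errtwo$ in the stated bound $\errone+2\errtwo$ originates; its absence from your final estimate is the symptom of the skipped step. To repair the argument, first establish a bound of the form $\errone+\errtwo$ on the difference of the appropriate variance-of-conditional-mean terms using the total-variance identity (as in Corollary~\ref{cor:2}), and only then run your denominator perturbation; the two contributions combine to $\errone+2\errtwo$ as claimed.
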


\begin{prop}\label{prop:bounds_2}
    \normalfont
    If Assumptions (1)-(3) from Proposition~\ref{prop:bounds} hold,
    then 
    $\left|\ts_{\ub} - \ts_{\ubtilde} \right| < \errone + \errtwo$. 
\end{prop}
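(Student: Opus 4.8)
The plan is to reuse the quotient-perturbation argument behind Proposition~\ref{prop:bounds}, while exploiting a structural feature of the \emph{total} index that removes one of the two $\errtwo$ contributions. Write $T_\ub = N_f/\V(f(\X))$ and $T_\ubtilde = N_q/\V(q(\Xtilde))$, where the numerators are the expected conditional variances $N_f = \E_{\X_\vb}[\Var_{\X_\ub}(f(\X)\,|\,\X_\vb)]$ and $N_q = \E_{\X_\vb}[\Var_{\Xtilde_\ubtilde}(q(\Xtilde)\,|\,\X_\vb)]$. By Assumption~(3), $\V(f(\X)) = 1$, so $T_\ub = N_f$.

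The first step bounds the numerator difference directly from Assumption~(1). Since $N_f - N_q = \E_{\X_\vb}\!\left[\Var_{\X_\ub}(f(\X)\,|\,\X_\vb) - \Var_{\Xtilde_\ubtilde}(q(\Xtilde)\,|\,\X_\vb)\right]$, the triangle inequality for integrals (moving $|\cdot|$ inside the expectation) followed by the definition of the $\mathcal{L}_1$ norm gives $|N_f - N_q| \le \norm{\Var_{\X_\ub}(f(\X)\,|\,\X_\vb) - \Var_{\Xtilde_\ubtilde}(q(\Xtilde)\,|\,\X_\vb)}_{\mathcal{L}_1} < \errone$. The crucial observation is that the total index carries the expected conditional variance of Assumption~(1) \emph{directly} in its numerator; unlike the first-order index of Proposition~\ref{prop:bounds}, no law-of-total-variance rewrite is needed (there such a rewrite introduces the difference of total variances, and hence an extra $\errtwo$), so here the numerator difference is controlled by $\errone$ alone.

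The second step is the quotient decomposition. Adding and subtracting $N_q$ and using $\V(f(\X)) = 1$, $T_\ub - T_\ubtilde = (N_f - N_q) + N_q\!\left(1 - \frac{1}{\V(q(\Xtilde))}\right) = (N_f - N_q) + T_\ubtilde\bigl(\V(q(\Xtilde)) - 1\bigr)$. Taking absolute values and applying the triangle inequality, $\bigl|T_\ub - T_\ubtilde\bigr| \le |N_f - N_q| + T_\ubtilde\,\bigl|\V(q(\Xtilde)) - 1\bigr|$. Here I would use that $T_\ubtilde$ is a Sobol' index, hence $0 \le T_\ubtilde \le 1$, and that $|\V(q(\Xtilde)) - 1| = |\V(q(\Xtilde)) - \V(f(\X))| < \errtwo$ by Assumptions~(2)--(3). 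Combining the two bounds yields $\bigl|T_\ub - T_\ubtilde\bigr| < \errone + \errtwo$.

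The main obstacle is really the bookkeeping observation in the first step: recognizing that the total index numerator is exactly the expected conditional variance appearing in Assumption~(1), so that the additional $\errtwo$ incurred in the first-order case is avoided. The remaining care is to control the cross-term in the quotient decomposition via $T_\ubtilde \le 1$; everything else is a routine application of the triangle inequality.
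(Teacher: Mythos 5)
Your proposal is correct and follows essentially the same route as the paper's proof: bound the numerator difference by $\errone$ via Jensen's/triangle inequality and the $\mathcal{L}_1$ norm in Assumption (1), split the difference of quotients by adding and subtracting the cross term, invoke Assumption (3) to clear the denominator, and absorb the remaining factor using $T_{\ubtilde}\le 1$ together with Assumption (2). Your observation that the total index avoids the law-of-total-variance rewrite (and hence the extra $\errtwo$ present in Proposition~\ref{prop:bounds}) is exactly why the paper's bound here is $\errone+\errtwo$ rather than $\errone+2\errtwo$.
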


Assumptions (1) and (2) are motivated by the belief that two valid MFU representations should have a similar effect in the output statistics of the QoI prediction. 
Assumption (3) is for mathematical convenience and physically immaterial as it corresponds to a change in units (for which Sobol' indices are invariant).
Note that both $f$ and $q$ are rescaled in the same manner to maintain consistency in units.
For a proof of Propositions~\ref{prop:bounds} and~\ref{prop:bounds_2}, see Appendix~\ref{sec:Appendix_B}.

\subsection{Numerical results: contaminant transport through a heterogeneous porous medium}
\label{sec:cont_transport}

\firstline{First, we present the application problem description.}
Transport through porous media is governed by the advection-diffusion equation, where the flow field (fluid velocity) depends on the porous medium's permeability and porosity fields through Darcy's law~\cite{bear_modeling_2010}.
Limitations in sensing technology make it impossible to perfectly characterize the heterogeneous properties of the subsurface. 
While it is possible to generate random instantiations of the subsurface and collect summary statistics to address missing information about the subsurface properties, this can be computationally prohibitive. 
Instead, it is common to assume statistical homogeneity in the subsurface (i.e.,~the statistical properties do not change throughout the domain) and to average the equations with the aim of predicting mean behavior.

\firstline{We consider the averaged governing equation for the mean contaminant concentration $\mean{c}$:}
\begin{equation}
  \begin{aligned}
    \diffp[]{\meanc}{t}(x,t) + \mean{u}\diffp[]{\meanc}{x}(x,t) + \diffp[]{\mean{u'c'}}{x}(x,t) = \nu_p \diffp[2]{\mean{c}}{x}(x,t), \\
     \mean{c}(0,t) = \mean{c}(L_x,t), \\
     \mean{c}(x,0) = \exp\left(-\frac{(x - s)^2}{2\ell^2}\right),
   \end{aligned}
   \quad
   \begin{aligned}
    x \in [0,L_x],
   \end{aligned}
   \label{eq:averaged}
\end{equation}
where $\mean{u}$ is the mean velocity, $\partial_x \mean{u'c'}$ is the \textit{dispersion} resulting from fluctuations in the velocity field, and $\nu_p \partial_{xx} \meanc$ is the pore-scale diffusion of the contaminant.
The initial condition is a Gaussian pulse with width $\ell=0.1$ and mode $s$.
We assume periodic boundary conditions, which is valid provided the velocity fluctuations are homogeneous with correlation lengths small compared to $L_x=4$.

\firstline{\Cref{eq:averaged} cannot be solved in this form because the dispersion term depends on the fluctuations of $u$ and $c$, which are not resolved in the upscaled model.}
An assumption about the mathematical form of the dispersion term must be introduced---this would be represented by the $a$ in our definition of the governing equations in~\eqref{eq:governing_eq}.
For heterogeneous porous media, significant fluctuations in the velocity lead to nonlocal transport of the contaminant, often called \textit{anomalous diffusion}~\cite{levy_measurement_2003,yeh_flow_2015,neuman_perspective_2009}.
This effect is shown in~\Cref{fig:anomalous_diffusion_example}, which compares the velocity fluctuations and evolution of the Gaussian initial condition in~\eqref{eq:averaged} for a homogeneous and a heterogeneous porous medium.
The higher-magnitude and larger-lengthscale fluctuations of the velocity in the heterogeneous case transports the contaminant downstream at drastically different rates across the domain, resulting in larger concentrations downstream.
In contrast to the homogeneous case, this effect is considered ``anomalous'' since its effect on the concentration does not mimic typical diffusion.
\begin{figure}[h!]
    \centering
    \includegraphics{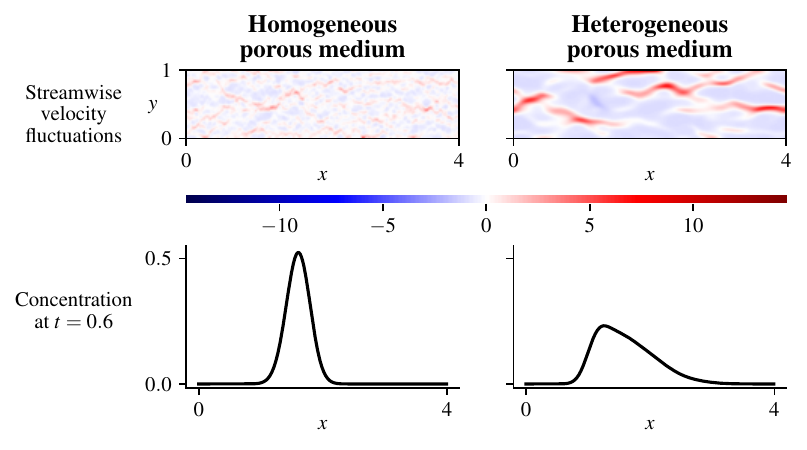}\vspace{-1em}
    \caption{A comparison of streamwise velocity fluctuations and concentration fields for a homogeneous porous medium (left) and a heterogeneous porous medium (right). Velocity fluctuations are for a single realization of a velocity field, while the concentration is averaged over many realizations and depthwise ($y$-direction) averaged.}
    \label{fig:anomalous_diffusion_example}
\end{figure}

\firstline{The mean contaminant concentration was computed for~\Cref{fig:anomalous_diffusion_example} using direct numerical simulation, by generating many porous medium realizations and solving the detailed governing equations, then averaging the solutions.}
However, in general, the exact nature of the nonlocality in the dispersion is uncertain due to lack of detailed information about the subsurface.
This leads to uncertainty in the assumed form of the dispersion.

\subsubsection{Potential model-form uncertainty representations for dispersion}\label{sec:mfu_reps}
\firstline{In this work we focus on MFU representations given as linear operators acting on $\mean{c}$. }
The most general form of such an operator is denoted as
\begin{align}
  \diffp{\mean{u'c'}}{x} &\approx -\Lcal \mean{c}. \label{eq:general_linear}
\end{align}
Under the mild assumption that $\Lcal$ is shift invariant (does not depend on absolute location, just relative distances), the Fourier modes are the eigenfunctions of the operator $\Lcal$:
\begin{align*}
    \Lcal e^{i a_k} = \lambda_k e^{i a_k}, \quad a_k = \frac{2 \pi k }{L_x}, \quad k\in \mathbb{Z}.
\end{align*}
\Cref{eq:averaged} thus admits a Fourier series solution, and the action of $\Lcal$ on $\mean{c}$ can be entirely described in terms of its eigenvalues $\lambda_k$.
The eigenvalues may be complex-valued, with real parts influencing the magnitudes and imaginary parts influencing the phases of associated Fourier coefficients~\cite{portone_bayesian_2022}.
In this work, we consider three MFU representations for dispersion. 
First, a general linear operator parametrized by its eigenvalues, with as many eigenvalues as there are terms in the Fourier series solution:
\begin{align}
    \Lcal_g [e^{ia_k}] &= \lambda_k e^{i a_k}, \quad k=1, \ldots, N_k. \label{eq:general_linear}
\end{align}
Second, a Reisz fractional derivative operator, parameterized by its scaling coefficient $\nu_m$ and fractional power $\alpha$:
\begin{align}
  \Lcal_{f} [e^{i a_k }] &= -\nu_m\diffp[\alpha]{}{x} [e^{i a_k} ] = - \nu_m (i a_k)^\alpha e^{i a_k}. \label{eq:fractional} 
\end{align}
Finally, a complex operator inspired by the Reisz fractional derivative operator, parameterized by the complex and imaginary counterparts of its eigenvalues:
\begin{align}
    \Lcal_{c} [e^{i a_k}] &= \left( - \nu_m^r (a_k)^{\alpha_r} + i \nu_m^i (a_k)^{\alpha_i} \right) e^{i a_k}. \label{eq:complex_fractional}
\end{align}

\firstline{Each MFU representation considered here has potential strengths and weaknesses.}
The general linear operator given by~\eqref{eq:general_linear} is the most flexible, since each eigenvalue can vary independent of the others. 
However, the number of parameters that must be constrained and informed by data could be  large for a fine spatial discretization.
A parametrization with dimensionality in the hundreds could easily result in computational intractability. 
Fractional derivative operators like~\eqref{eq:fractional} yield nonlocal behavior for fractional powers $\alpha \in (1,2)$, making them an attractive option to represent MFU in dispersion for this application problem. 
Their simple parametrization in terms of the fractional power and scaling coefficient also makes them more tractable than the general linear operator;
however, they may not be flexible enough to reproduce dispersion effects in all problems---for example, in practice the real and imaginary parts may need to vary independently to represent differing influences on diffusion and advection by dispersion.
This inflexibility motivated the form of the complex operator inspired by the fractional derivative given in~\eqref{eq:complex_fractional}. 
While having relatively few parameters, it is more flexible in the behavior of the real and imaginary parts of the eigenvalues.

\firstline{Each MFU representation can be constrained to respect physical principles.}
For example, for the general linear operator $\Re[\lambda_k] < 0$ ensures diffusive behavior in $\mean{c}$, and $\lambda_0=0$ conserves mass.
For additional constraints that can be placed on the parametrization of the general linear operator, see~\cite{portone_bayesian_2022}.
For the fractional derivative and complex fractional operators, all scaling coefficients must be positive to ensure diffusive behavior (for real parts) and advection downstream (for imaginary parts).
Additionally, the fractional powers must remain in the range $(1,2)$ to express nonlocality of the dispersion.
A benefit specific to the fractional derivative operator is that it preserves positivity of $\mean{c}$ by construction, which is not as easily imposed on the other operators. 
\subsubsection{Model-form uncertainty propagation without data}\label{sec:MFU_forward}
\firstline{Here we demonstrate how MFU can be characterized and propagated without access to calibration data by bringing to bear physical properties and knowledge about the modeled phenomenon.}
In all following discussions, the quantity of interest is the mean concentration $\mean{c}$ at the outflow boundary at time $t=1.5$.
We first characterize uncertainty in the model parameters bulk velocity $\mean{u}$,  pore-scale diffusion coefficient $\nu_p$, and contaminant source location $s$:
\begin{equation}
\begin{aligned}
  \mean{u} &\sim  1 \cdot \left( 1 + \mathcal{U}[ -0.1, 0.1]\right),\\
  \nu_p &\sim  0.01 \cdot \left(1 + \mathcal{U}[-0.2, 0.2 ] \right), \\
  s &\sim \mathcal{U}[0.2, 1.5].
\end{aligned}
\label{eq:other_param_distributions}
\end{equation}
The distributions for $\mean{u}$ and $\nu_p$ represent a $\pm10$ and $20$ percent uniform uncertainty, respectively, about a nominal value.
The distribution for $s$ reflects the assumption that the source location is in the upstream section of the computational domain without being too close to the upstream boundary.

\firstline{Here, we represent MFU in the dispersion term using the fractional derivative operator~\eqref{eq:fractional} for its simple parametrization and because it preserves positivity in $\mean{c}$ by construction---an advantage in this case since we don't have calibration data to inform the more general parameterizations discussed above.}
Encoding physical properties (see~\Cref{sec:mfu_reps}) into the probabilistic representation of the MFU parameters of~\eqref{eq:fractional} provides the following assumed distributions:
\begin{align}
  \label{eq:nu_dist} \nu_m &\sim \scriptU[0.05, 0.15], \\
 \label{eq:alpha_dist} \alpha &\sim \text{Triangular}[1,2,1.5],
\end{align}
where~\eqref{eq:nu_dist} is based on the assumptions that dispersion dominates pore-scale diffusion (and thus the scaling coefficient $\nu_m$ should be larger than $\nu_p$) and that dispersion is not so great that it completely diffuses the solution by the time it has been transported the length of the domain.
The expression given by~\eqref{eq:alpha_dist} is based on the knowledge that the fractional power should fall in the range $[1,2]$, with the mode set at $1.5$ to weakly concentrate fractional powers away from the boundary values, which result in pure advection or pure diffusion.

Forward propagation of these uncertainties is shown in~\Cref{fig:FRADE_forward_UQ}, where~\Cref{fig:FRADE_spatial_samples} displays 10 sample evolutions of $\mean{c}$ over the computational domain. 
Note the varying degrees of nonlocality in the concentration field, as evidenced by some samples being right-skewed with a heavier tail downstream. 
A histogram of our QoI, the outflow concentration at $t=1.5$, is shown in~\Cref{fig:FRADE_qoi_samples}.
Note that all samples are positive due to the fractional derivative MFU representation enforcing positivity of concentrations by construction. 

\begin{figure}[h]
    \centering
    \begin{subfigure}{0.45\textwidth}
    \centering
\includegraphics{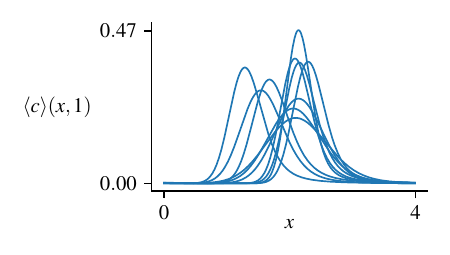}\vspace{-1em}
    \caption{10 sample evolutions of the mean concentration at $t=1$ for different input sample values.}
    \label{fig:FRADE_spatial_samples}
    \end{subfigure}
    \hspace{2em}
    \begin{subfigure}{0.45\textwidth}
        \centering
    \includegraphics{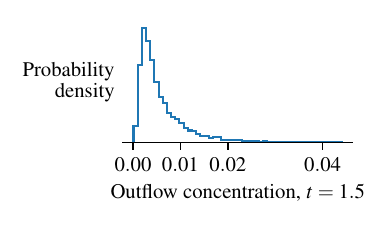}\vspace{-1em}
    \caption{Histogram of the QoI, outflow concentration at $t=1.5$, computed for 1000 input sample values. The $x$ ticks indicate lower bound, mean, 95$^{th}$ percentile, and upper bound from left to right.}
\label{fig:FRADE_qoi_samples}
    \end{subfigure}
    \caption{Output samples generated according to input probability distributions~\Cref{eq:other_param_distributions,eq:nu_dist,eq:alpha_dist}.}
    \label{fig:FRADE_forward_UQ}
\end{figure}

\subsubsection{Verification of robustness of Sobol' indices for different MFU parametrizations}\label{sec:sobol_comp}

\firstline{Here we numerically corroborate the theoretical results presented in~\Cref{sec:robustness}, which state that two different MFU parameterizations inducing similar output statistics should result in similar Sobol' indices.}
Since the fractional derivative is a specialization of the general linear operator, we can derive a parametrization of $\Lcal_g$ that produces very similar output statistics to those induced by $\Lcal_f$, which enables us to evaluate the theoretical results numerically.
Following the notation of~\Cref{sec:robustness}, we compare the grouped Sobol' index for the fractional derivative MFU parameters $\X_\ub=[\nu_m,\alpha]$ and the general linear operator MFU parameters $\Xtilde_\ubtilde=\lb$ relative to the other uncertain model parameters in \Cref{eq:averaged}, $\X_{\vg} = [\langle u\rangle, \nu_p, s]$.
We additionally denote the QoI obtained with the fractional derivative MFU representation as $f(\X)$ and with the general linear operator as $g(\Xtilde)$.
We adopt the probability distributions for $\X_\ub$ and $\X_{\vg}$ discussed in~\eqref{sec:MFU_forward}.

\firstline{To obtain a general linear operator MFU representation that produces similar statistics to the fractional derivative representation, we use data-consistent stochastic inversion (DCI)~\cite{butler2018,butler2020}.}
DCI is a measure-theoretic framework for solving stochastic inverse problems that identifies parameter distributions whose predicted outputs reproduce distribution on the observed data. 
Therefore, DCI is leveraged to determine
a distribution on the eigenvalues $\lambda_k$ that results in the the distribution---and hence statistics--of $g(\X)$ being consistent with$f(\X)$ (the QoI corresponding to  the fractional derivative representation).
More details on this procedure can be found in~\Cref{sec:Appendix_B}.

\firstline{Numerical comparison of the Sobol' indices for the fractional derivative and generalized linear operators are carried out as follows.}
50 replicate Sobol' indices are computed using 5e4 independent samples.
For fairness in comparison, the same set of random samples of the auxiliary parameters $\langle u\rangle, \nu_p$ and $s$ are used to estimate indices across the two MFU representations.
For each replicate sample, the QoI is rescaled so that its variance for the fractional derivative MFU representation is equal to $1$. 
The mean variance over replicates of the QoI for the general linear operator representation is $1.02$ with a standard deviation of $0.01$,
indicating an approximately 2\% difference in the variance from the fractional derivative parameterization.

\firstline{A comparison of the main and total effect Sobol' indices for the two MFU representations is presented in \Cref{fig:sobols_compared}.}
The Sobol' index values are qualitatively similar; the conclusions would be identical for a ranking problem (ranking the most important sources of uncertainty).
\begin{figure}[h]
  \centering
  \includegraphics{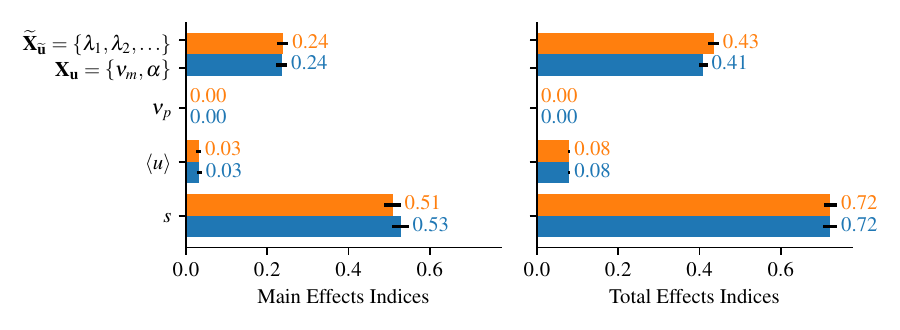}\vspace{-1em}
  \caption{A comparison of 50 replicate Sobol' indices for the fractional derivative MFU representation case (orange) and the general linear operator (blue). Bar charts and numeric values are the mean over the replicates. Black error bars represent a $2\sigma$ bound over replicates.}
  \label{fig:sobols_compared}
\end{figure}
\firstline{Having numerically corroborated our theoretical findings, we now test the tightness of the bounds derived in~\Cref{sec:robustness}.}
As discussed above, the norm of differences in total variances $\errtwo$ is 0.02 on average.
We compute the norm of differences in conditional variances $\errone$ with quadrature when possible and Monte Carlo sampling otherwise, again computing 50 replicate samples.
On average, $\errone=0.12$ with a standard deviation of $0.02$.
The resulting bounds on the differences between the main and total effects indices for the MFU representation parameters are therefore approximately $0.16$ and $0.14$, respectively.
The results in~\Cref{fig:sobols_compared} indicate the actual difference between the main and total effects indices to be $0.008$ and $0.025$ on average, which could indicate the bounds derived in \Cref{sec:robustness} are not tight.

Future work could aim to derive tighter bounds;
however, for the purpose of this work, this numerical example is sufficient to confirm the intuitive understanding that MFU parametrizations resulting in similar QoI statistics should have similar Sobol' indices.

\subsubsection{Calibrating model-form uncertainty with data}\label{sec:MFU_inverse}
\firstline{Here we demonstrate how the hierarchical MFU formulation can be informed using data and show how sensitivities before and after Bayesian inference can be compared.}
This work serves as an application of the methods demonstrated on a simple polynomial example in~\Cref{sec:back_methods}, applied to the more complex application problem of contaminant transport.
We found in preliminary studies that the fractional derivative MFU representation defined in~\eqref{eq:fractional} was not adequately flexible to capture MFU for the calibration case, so we use the complex fractional MFU representation defined in~\eqref{eq:complex_fractional} in these results.

\subsubsection*{Prior densities}
\firstline{First, we assign more flexible prior densities on the model parameters ($\mean{u}$, $\mean{u}$, and $s$) than the distributions assumed in~\Cref{sec:MFU_forward} to ensure artificially-imposed bounds do not influence the Bayesian inference.}
Since each of the model parameters is known to be positive with a specified nominal value, a log-normal prior density is assumed where the mode falls at the nominal value, and 95\% probability mass falls below 120\% of the nominal value.
Nominal values are 1, 0.01, and 1 for $\mean{u}, \nu_p,$ and $s$ respectively.

\firstline{Next, we consider the complex fractional MFU representation, where the scaling parameters $\nu_m^r$ and $\nu_m^i$ are known to be positive, and thus assumed to be log-normally distributed.}
We assign the same hyperpriors for both scaling parameters as there is no reason to distinguish between the real and imaginary parts of the eigenvalues
\textit{a priori}. 
Recall that the PDF for a log-normal random variable $x$ is defined as
\begin{align*}
    \pi(x) &= \frac{1}{x\sigma\sqrt{2\pi}}\exp\left( -\frac{(\ln(x)-\mu)^2}{2\sigma^2} \right).
\end{align*}
Therefore, hyperpriors must be defined for $\mu$ and $\sigma$.

\firstline{We do not have intuition for plausible values of the hyperparameters $\mu$ and $\sigma$, so we use intuition about plausible values for the scaling parameters to define the hyperpriors.}
Namely, we anticipate the scaling parameters fall in the range $[0.1,0.5]$, with weaker confidence in the lower vs.~the upper bound.
We express this belief by stating that $\mathbb{P}(\nu_m\leq 0.1)\approx 0.1$ and $\mathbb{P}(\nu_m\leq 0.5)\approx 0.99$. 
We map this belief about plausible values for the scaling parameters into nominal values for the hyperparameters $\mu$ and $\sigma$ using the quantile function for log-normal random variables, which returns a value $x$ associated with a probability level $p$:
\begin{align*}
    Q(p) = \exp\left( \mu + \sigma\Phi^{-1}(p)\right),
\end{align*}
where $\Phi$ is the CDF of the standard normal distribution.
Substituting $Q(0.1)=0.1$ and $Q(0.99)=0.5$ into the equation above, we arrive at two equations to solve for the two unknowns, $\mu$ and $\sigma$, which we will use as our nominal values $\mu_n$, $\sigma_n$.
Since $\mu$ needn't be positive, it is assumed to be a normal random variable with mean $\mu_n$ and standard deviation $0.5|\mu_n|$.
Since $\sigma$ must be positive, it is assumed to be log-normal with mode at $\sigma_n$ and 99\% probability below $1.5\sigma_n$.

\firstline{For the fractional powers, $\alpha^r$ and $\alpha^i$,  we again assume the the real and imaginary terms have the same prior specification.}
As in~\Cref{sec:MFU_forward}, we assume the fractional powers follow a triangular density on the range $[1,2]$, but the mode is inferred as a hyperparameter.
With no intuition about the mode beyond that it should fall within $[1,2]$, we assume a uniform prior on this range.

\subsubsection*{Likelihood}
\firstline{Calibration data is generated from the averaged governing equations for the mean contaminant concentration $\mean{c}$ assuming a general linear operator:}
\begin{equation}
    \begin{aligned}
        \diffp[]{\meanc}{t}(x,t) + \diffp[]{\mean{u'c'}}{x}(x,t) = \nu_p \diffp[2]{\mean{c}}{x}(x,t) + \Lcal\mean{c}, \\
     \mean{c}(0,t) = \mean{c}(L_x,t), \\
     \mean{c}(x,0) = \exp\left(-\frac{(x - s)^2}{2\ell^2}\right),
   \end{aligned}
   \quad
   \begin{aligned}
    x \in [0,L_x],
   \end{aligned}
   \label{eq:likelihood_model}
\end{equation}
with $\mean{u}=1.05$, $\nu_p=0.0095$, $s=0.9$, and $\ell, L_x$ defined identically to~\eqref{eq:averaged}. 
The linear operator is parameterized by its eigenvalues, where high-fidelity eigenvalues are computed via Monte Carlo sampling and ensemble averaging of the detailed governing equations, as described in~\cite{portone_representing_2019}.
The problem is discretized using a uniform grid with $N_x=512$.
Since the averaged equation admits a Fourier series solution, it can be evaluated at any point in time.

\firstline{To mimic the type of information typically available for this application area, we assume data can only be collected at a single well upstream of the outflow boundary.}
The calibration data is thus collected at $x=1.4$ with observations at $\mathbf{t}=[0.01, 0.02, \ldots, 0.2]$ (20 uniformly-spaced time observations up to $t=0.2$).
Because concentration data must always be positive, measurement error is assumed to be multiplicative with log-normal measurement error:
\begin{align*}
    d_i = \epsilon_m^{(i)} \cdot \mean{c}(1.4, t_i).
\end{align*}
We thus pose the likelihood in terms of the log-transformed concentrations, i.e., 
\begin{align}
    \log(d_i) = \log(\mean{c}(1.4, t_i)) + \log(\epsilon_m^{(i)}), \quad \log(\epsilon_m^{(i}))\sim\Ncal(0, (10^{-2})^2).
\end{align}
Note that this formulation of the measurement uncertainty is posed such that the median of the untransformed error distribution is 1.W

\firstline{Since the complex fractional MFU representation does not preserve positivity of $\mean{c}$ by construction, we penalize for negative concentrations in the likelihood.}
This is done in a rejection-sampling fashion, where if negative concentrations are observed for a given ordered pair of parameter values, a likelihood of $0$ is assigned. 
Since it would be computationally intractable to check for positivity over all times, we evaluate the model and check for negative concentrations at all spatial locations for times $0.1, 0.5, 1, 1.5,$ and $2$.
Note that since this check is only performed for a small number of time points during calibration, this approach can't penalize parameters that produce negative concentrations at other times. 
Therefore, forward uncertainty propagation is not guaranteed to result in positive concentrations for all model outputs.

\firstline{The ``true'' QoI value is also computed by solving~\eqref{eq:likelihood_model} for $\mean{c}$ at the outflow boundary ($x=4$) at $t=1.5$.}
Note that the times and location of calibration data differ significantly from the prediction QoI's, so proper calibration of uncertainties is critical to meaningfully extrapolate.

\subsubsection*{Inference results}
\firstline{
~\Cref{fig:cFRADE_model_param_posteriors} depicts the histograms of the prior and posterior densities for the model parameters vs.~their true values in the data-generating model.}
The source location for the contaminant, $s$, exhibits significant bias in the posterior relative to its true value, indicating that for this example, hierarchical calibration leveraging an MFU representation did not completely mitigate posterior bias and overconfidence.  
We expand upon challenges associated with hierarchical inference that impact its application to models leveraging an MFU representation in~\cite{portone_theoretical_2025}.
\begin{figure}[h]
    \centering
    \includegraphics{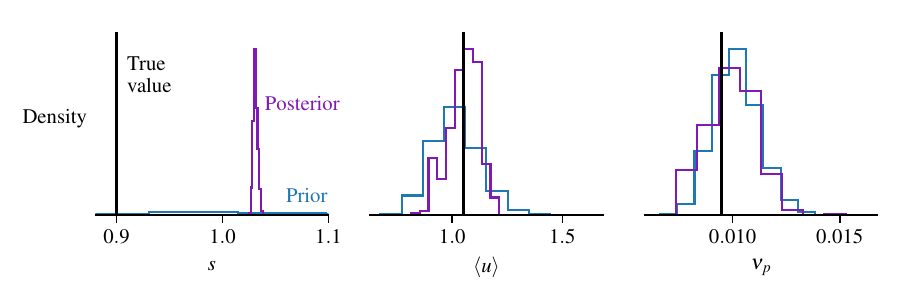}\vspace{-1em}
    \caption{Histograms of the prior and posterior densities for the model parameters vs.~their true values in the data-generating model.}
    \label{fig:cFRADE_model_param_posteriors}
\end{figure}

\firstline{Despite the observed bias in the model parameter posterior densities, the pushforward of the posterior probability densities to the prediction QoI (an outflow concentration significantly differing in time and location from the calibration observables), shown in~\Cref{fig:cFRADE_pushforwards}, shows a shift in the mode toward the true value of the QoI.}
Additionally, although the support of the prior and posterior pushforward densities does not visually differ significantly, the prior pushfoward has heavier tails that result in significantly higher variance ($10^{-3}$ and $3\cdot 10^{-4}$ for prior and posterior, respectively). 
\begin{figure}[h]
    \centering
    \includegraphics{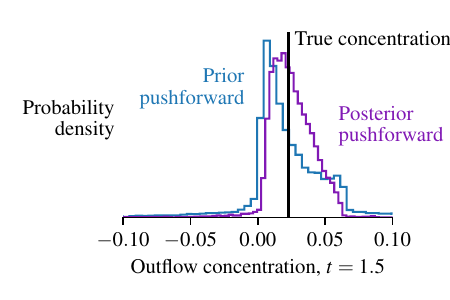}\vspace{-1em}
    \caption{Histograms of the prior and posterior pushforward to the QoI, compared to its true value.}
    \label{fig:cFRADE_pushforwards}
\end{figure}

\firstline{Negative concentrations are observed for both the prior and posterior pushforwards, although they are low probability.}
This is because the complex fractional MFU representation doesn't preserve positivity of concentration by construction.
However, due to the Bayesian calibration, the posterior pushforward has a lower incidence of negative values---2\% for posterior vs.~7\% for prior.

\subsubsection{Forward propagation of MFU post-calibration}
\firstline{Bayesian calibration resulted in correlations between model and MFU parameters in the posterior, as shown in~\Cref{fig:cFRADE_posterior_ccs} (further investigation would be needed to determine whether this correlation is physical or the result of identifiability issues).}
However, as discussed in~\Cref{sec:dependent_groups_SA}, it is still meaningful to compute the total effects indices for the two groups of parameters, as is depicted in~\Cref{fig:cFRADE_total_effects}.
Clearly, uncertainty in the dispersion assumption dominates uncertainty in the QoI post-calibration.
\begin{figure}[h]
    \centering
    \begin{subfigure}{0.48\textwidth}
    \centering
    \includegraphics[width=\textwidth]{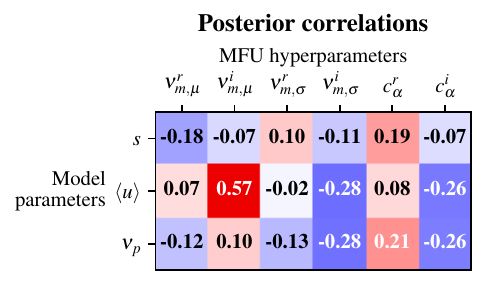}\vspace{-1em}
    \caption{Correlation coefficients between model parameters (rows) and MFU hyperparameters (columns).}
    \label{fig:cFRADE_posterior_ccs}
    \end{subfigure}\hspace{1em}
    \begin{subfigure}{0.48\textwidth}
        \centering
    \includegraphics{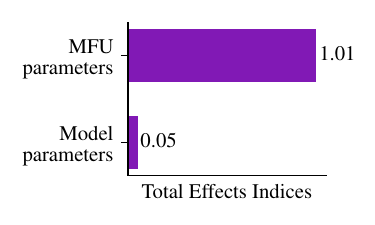}\vspace{-1em}
    \caption{Total effects indices for the QoI, computed using the joint posterior probability density for model and MFU parameters.}
    \label{fig:cFRADE_total_effects}
    \end{subfigure}
    \caption{Posterior density statistics.}
    \label{fig:posterior_corrs_and_total_effects}
\end{figure}

\firstline{To compare the relative influence of model and MFU parameters prior to and after calibration, we consider the unnormalized numerators of the total effects indices in~\Cref{fig:cFRADE_sobol_numerators}.}
The magnitude of the numerators for the posterior pushforward to the QoI has dropped relative to the prior pushforward. 
For both the prior and posterior, the MFU parameters appear to be more influential, although the ratio of the numerators between the groups becomes more pronounced post-calibration, indicating increased significance of MFU relative to model parameter uncertainty.
In the case of the posterior sensitivities, the ratio between MFU and model parameters is so great that the importance of model parameters could be considered negligible, indicating that further data collection or model improvement efforts should be focused on the dispersion assumption.
\begin{figure}[h]
    \centering
    \includegraphics{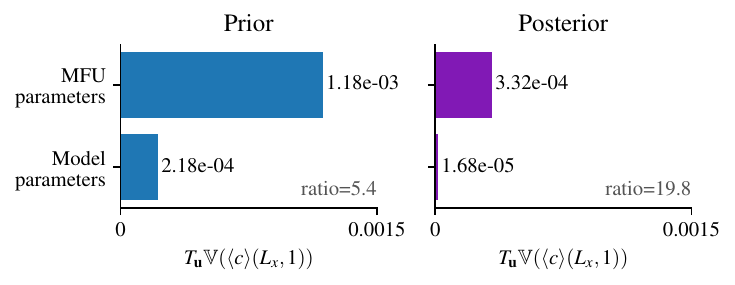}\vspace{-1em}
    \caption{Comparison of the numerators of total indices before (left) and after (right) Bayesian calibration.}
    \label{fig:cFRADE_sobol_numerators}
\end{figure}

\section{Conclusions}\label{sec:conclusions}

\firstline{This paper proposes a novel approach to quantify the impact of modeling assumptions on model predictions.}
We leverage model-form uncertainty (MFU) representations to represent uncertain modeling assumptions while grouped variance-based sensitivity analysis (VBSA) is used to quantify the impact of MFU on model predictions, which may be extrapolative.
This approach enables model assumptions and other sources of uncertainty to be holistically ranked in order of importance to model predictions, thereby facilitating more quantitative prioritization of modeling and data acquisition efforts towards the aspects of the model that most impact predictions.

\firstline{To establish robustness of the approach to MFU parameterization, we prove that for two MFU representations, the difference in Sobol' index values is bounded by statistical differences in their respective model outputs.}
Thus, MFU representations that produce ``similar'' effects on model outputs will have ``similar'' Sobol' index values.
We numerically demonstrate the robustness of the approach on an upscaled contaminant transport problem and find that for two MFU representations of dispersion, the difference in Sobol' indices is negligible.
Future work on the robustness of Sobol' indices to MFU representation will focus on the derivation of tighter bounds.

\firstline{We demonstrate how our approach can be applied with and without access to calibration data to inform the MFU representation.}
In the case where data is used to inform the representation via Bayesian inference, we show how VBSA can still be applied and interpreted despite violation of the common assumption that input distributions are statistically independent. 
We present numerical demonstrations in the form of a simple polynomial example as well as for the upscaled contaminant transport problem.
While the approach was applied to the hierarchical MFU formulation presented in~\Cref{eq:hierarchical_distribution}, it is amenable to other formulations, such as the one posed by~\cite{sargsyan_embedded_2019}, so long as the formulation yields an MFU parameterization.

\firstline{Embedding MFU representations within the governing equations, rather than augmenting individual model outputs with discrepancy terms, lets one measure the importance of model assumptions to model outputs over a range of model outputs, both observable and unobservable.}
This enables assessment of assumption importance to extrapolative model predictions and across a range of modeling use cases.
Future work could exploit this capability to measure the importance of assumptions to model outputs used in validation contexts to ensure that validation tests exhibit similar sensitivity to assumptions as those observed for target prediction use cases.
Application of our approach to assess validation test relevance would follow in the spirit of previous works which employ sensitivity analysis to assess the utility of validation tests~\cite{li_role_2016,paquette-rufiange_optimal_2023}---however, this work allows for the novel incorporation of MFU into such assessments.

\firstline{Additional work is needed to increase the practical applicability of our approach.}
A general approach to develop an MFU representation does not currently exist, and given the problem-specific nature of model assumptions, a general approach may be infeasible.
However, methods that enable rapid prototyping of MFU representations with minimal intrusion on simulation codes would significantly increase ease of adoption of our approach.
Additionally, pick-freeze methods to compute Sobol' indices are computationally costly, often requiring $\geq 10^5$ evaluations, with costs that scale linearly with the number of inputs/groups.
Methods that mitigate this cost are needed to increase tractability for computationally expensive models.
Given-data~\cite{li_efficient_2016,plischke_global_2013,borgonovo_common_2016} and multifidelity methods~\cite{qian_multifidelity_2018} for Sobol' index computation are promising avenues of investigation in this regard.
Finally, methods are needed for joint calibration of model and MFU parameters that mitigate confounding effects between the parameter groups.
\section*{Acknowledgements}
We thank Luis Damiano for helpful conversations on hierarchical Bayesian problems and Tian Yu Yen for helpful conversations on data-consistent inversion.
This work was supported by the Laboratory Directed Research and Development program
(Project 233072) at Sandia National Laboratories, a multimission laboratory managed and
operated by National Technology and Engineering Solutions of Sandia LLC, a wholly owned
subsidiary of Honeywell International Inc. for the U.S. Department of Energy’s National Nuclear
Security Administration under contract DE-NA0003525.
This article has been authored by an employee of National Technology \& Engineering Solutions of Sandia, LLC. The employee owns all right, title and interest in and to the article and is solely responsible for its contents. The United States Government retains and the publisher, by accepting the article for publication, acknowledges that the United States Government retains a non-exclusive, paid-up, irrevocable, world-wide license to publish or reproduce the published form of this article or allow others to do so, for United States Government purposes. The DOE will provide public access to these results of federally sponsored research in accordance with the DOE Public Access Plan \url{https://www.energy.gov/downloads/doe-public-access-plan}.
This paper describes objective technical results and analysis. Any subjective views or opinions that might be expressed in the paper do not necessarily represent the views of the U.S. Department of Energy or the United States Government.

\appendix
\section{Grouped Sobol' index estimator derivations}
\label{sec:Appendix_A}

Let us first derive the estimator for the numerator for the grouped total Sobol' index, 
$$\E_{\binpRV_{\sim \ug}} \left[ \V_{\binpRV_{\ug}} \left( f\left(\binpRV \right) | \binpRV_{\sim \ug} \right) \right].$$
Let $\pi({\bm x})$ be the probability density function (PDF) associated with $\binpRV$, which is assumed to exist; similarly, $\pi({\bm x}_{\ug})$ and $\pi({\bm x}_{\sim \ug})$ are the joint PDFs associated with $\binpRV_{\ug}$ and $\binpRV_{\sim \ug}$, respectively.
Furthermore, 
we let $\pi({\bm x}) \equiv \pi({\bm x}')$, and use prime notation to denote a replicate set of samples, e.g. ${\bm x} \sim \pi({\bm x}) \implies {\bm x}' \sim \pi({\bm x})$.

Next, consider that 
\begin{eqnarray}
\V_{\binpRV_{\ug}} \left( f\left(\binpRV \right) | \binpRV_{\sim \ug} \right) =
\E_{\binpRV_{\ug}} \left[f^2\left( \binpRV \right) | \binpRV_{\sim \ug} \right]
- \E^2_{\binpRV_{\ug}} \left[f\left( \binpRV \right) | \binpRV_{\sim \ug} \right].
\end{eqnarray}
Therefore, 
\begin{eqnarray}\label{eq:g_tot_derv}
\E_{\binpRV_{\sim \ug}} \left[ \V_{\binpRV_{\ug}} \left( f\left(\binpRV \right) | \binpRV_{\sim \ug} \right) \right] =
\E_{\binpRV_{\sim \ug}} \left[ 
\E_{\binpRV_{\ug}} \left[f^2\left( \binpRV \right) | \binpRV_{\sim \ug} \right] \right]
- 
\E_{\binpRV_{\sim \ug}} \left[ 
\E^2_{\binpRV_{\ug}} \left[f\left( \binpRV \right) | \binpRV_{\sim \ug} \right]
\right].
\end{eqnarray}
Now consider that the second term of~\eqref{eq:g_tot_derv} can be expanded as 
\begin{eqnarray}\nonumber
&&\E_{\binpRV_{\sim \ug}} \left[ 
\E_{\binpRV_{\ug}} \left[f\left( \binpRV \right) | \binpRV_{\sim \ug} \right]
\E_{\binpRV_{\ug}} \left[f\left( \binpRV \right) | \binpRV_{\sim \ug} \right]
\right] \\\nonumber
= &&
\E_{\binpRV_{\sim \ug}} \left[
\int \int f\left( {\bm x}_{\sim \ug}, {\bm x}_{\ug}\right) 
f\left( {\bm x}_{\sim \ug}, {\bm x}'_{\ug}\right) 
\pi({\bm x}_{\ug}) d{\bm x}_{\ug} \pi({\bm x}'_{\ug})  d{\bm x}'_{\ug}
\right] \\
= &&
\int \left(
\int \int f\left( {\bm x}_{\sim \ug}, {\bm x}_{\ug}\right) 
f\left( {\bm x}_{\sim \ug}, {\bm x}'_{\ug}\right) 
\pi({\bm x}_{\ug}) d{\bm x}_{\ug} \pi({\bm x}'_{\ug})  d{\bm x}'_{\ug}
\right) \pi({\bm x}_{\sim \ug})d{\bm x}_{\sim \ug}\\\label{eq:term2}
=&&
\int \int f\left( {\bm x} \right) f\left( {\bm x}_{\sim \ug}, {\bm x}'_{\ug}\right) 
\pi({\bm x}) d{\bm x} \pi({\bm x}'_{\ug})d{\bm x}'_{\ug}.
\end{eqnarray}
Now consider the first term in \eqref{eq:g_tot_derv}, can be expanded as 
\begin{eqnarray}\nonumber
\E_{\binpRV_{\sim \ug}} \left[ 
\E_{\binpRV_{\ug}} \left[f^2\left( \binpRV \right) | \binpRV_{\sim \ug} \right] \right] &=&
\E_{\binpRV_{\sim \ug}} \left[ 
\int f^2({\bm x}_{\sim \ug}, {\bm x}_{\ug})\pi({\bm x}_{\ug})d{\bm x}_{\ug}
\right]\\\nonumber
&=&
\int \int f^2({\bm x}_{\sim \ug}, {\bm x}_{\ug})\pi({\bm x}_{\ug})d{\bm x}_{\ug} \pi({\bm x}_{\sim \ug})d{\bm x}_{\sim \ug} \\\label{eq:term1}
&=&
\int f^2({\bm x})\pi({\bm x})d{\bm x}.
\end{eqnarray}
We can then add a constant into \eqref{eq:term1} as
\begin{eqnarray}\label{eq:term1_2}
\int \int f^2({\bm x})\pi({\bm x})d{\bm x} \pi({\bm x}'_{\ug})d{\bm x}'_{\ug}.
\end{eqnarray}
Finally, we add \eqref{eq:term2} and \eqref{eq:term1_2} to represent the numerator of the grouped total Sobol' index as 
\begin{eqnarray}\nonumber
\E_{\binpRV_{\sim \ug}} \left[ \V_{\binpRV_{\ug}} \left( f\left(\binpRV \right) | \binpRV_{\sim \ug} \right) \right] &=& 
\int \int \left( f^2\left({\bm x} \right) - 
f\left( {\bm x}\right) 
f\left( {\bm x}_{\sim \ug}, {\bm x}'_{\ug}\right) 
\right)
\pi({\bm x})d{\bm x} \pi({\bm x}'_{\ug})d{\bm x}'_{\ug}
\\\label{eq:total_num_est}
&\approx&
\frac{1}{2N} \sum_{i=1}^N \left( f(\x^{(i)}) - f(\x^{(i)}_{\sim \ub},\x'^{(i)}_\ub) \right)^2, \quad \x^{(i)}, \x'^{(i)} \sim \pi(\x).
\end{eqnarray}
Note that~\eqref{eq:total_num_est} results from the fact that 
\begin{eqnarray*}
    \frac{1}{2}\left( f(\x^{(i)}) - f(\x^{(i)}_{\sim \ub},\x'^{(i)}_\ub) \right)^2 = \frac{1}{2} \left( f^2(\x^{(i)}) - 2f(\x^{(i)})f(\x^{(i)}_{\sim \ub},\x'^{(i)}_\ub) + f^2(\x^{(i)}_{\sim \ub},\x'^{(i)}_\ub) \right),
\end{eqnarray*}
where $\frac{1}{2}\left(f^2(\x^{(i)}) + f^2(\x^{(i)}_{\sim \ub},\x'^{(i)}_\ub) \right)$ is an estimator for $f^2({\bm x})$, and $f(\x^{(i)})f(\x^{(i)}_{\sim \ub},\x'^{(i)}_\ub)$ an estimator for $f({\bm x})f({\bm x}_{\sim \ug}, {\bm x}'_{\bm u})$. 
Furthermore, from \eqref{eq:total_num_est} one can understand the term \textit{pick-and-freeze} as $f(\x^{(i)}_{\sim \ub},\x'^{(i)}_\ub)$  differs from $f(\x^{(i)}) = f(\x_{\sim \ug}^{(i)}, \x_{\ug}^{(i)})$ in that the group of inputs $\ug$ are from the replicate samples ${\x}'$ rather than ${\x}$; we've ``picked'' the group $\sim \ug$ to be ``frozen'', i.e., from the same samples ${\x}$.

A similar derivation produces the estimator for the numerator of the grouped first-order index given in \eqref{eq:grouped_main_effect}:
\begin{align}
    \Var_{\X_\ub}\left(\E_{\X_{\sim \ub}}\left[f(\X)|\X_\ub\right]\right) &\approx
    \frac{1}{N}\sum_{i=1}^N f(\x^{(i)})\left( f(\x'^{(i)}_{\sim \ub}, \x^{(i)}_\ub) - f(\x'^{(i)}) \right), \quad \x^{(i)}, \x'^{(i)} \sim \pi(\x).
\end{align}
See \cite{prieur_2017} for details.
Finally, we use the two replicate sample sets $\left\{\x^{(i)}, f(\x^{(i)})\right\}_{i=1}^N$ and $\left\{\x'^{(i)}, f(\x'^{(i)})\right\}_{i=1}^N$ to estimate the total variance used in the denominators of the Sobol' index estimators as
\begin{align}
    \hat{\mu}_f &\approx \frac{1}{N}\sum_{i=1}^N f(\x^{(i)}), \nonumber \\
    \hat{\mu}'_f &\approx \frac{1}{N}\sum_{i=1}^N f(\x'^{(i)}), \nonumber \\
    V(f(\X)) &\approx \frac{1}{2N}\sum_{i=1}^N \left(f(\x^{(i)}) - \hat{\mu}_f \right)^2 + \left(f(\x'^{(i)}) - \hat{\mu}'_f \right)^2.
\end{align} 
\section{Proofs of Propositions~\ref{prop:bounds} and~\ref{prop:bounds_2}}
\label{sec:Appendix_B}

Here, we prove Propositions~\ref{prop:bounds} and~\ref{prop:bounds_2}.

\begin{cor}
    \label{cor:2}
    $\bigg| \textnormal{Var}_{\X_\vb} \left( \E_{\X_\ub} \left[ f(\binpRV) | \X_\vb \right] \right) - 
    \textnormal{Var}_{\X_\vb} \left( \E_{\Xtilde_\ubtilde} \left[ q(\Xtilde) | \X_\vb \right] \right)\bigg| \leq \errone + \errtwo$
\end{cor}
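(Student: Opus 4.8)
The plan is to express each variance-of-conditional-expectation through the law of total variance, thereby reducing the quantity of interest to the two differences already controlled by Assumptions (1) and (2). First I would apply the law of total variance to $f$, conditioning on the shared group $\X_\vb$ and integrating out $\X_\ub$,
\[
\Var_{\X_\vb}\left(\E_{\X_\ub}[f(\X)|\X_\vb]\right) = \Var(f(\X)) - \E_{\X_\vb}\left[\Var_{\X_\ub}(f(\X)|\X_\vb)\right],
\]
together with the identical identity for $q$, with $\Xtilde_\ubtilde$ in place of $\X_\ub$ and $q(\Xtilde)$ in place of $f(\X)$. This is the decisive step: it rewrites the conditional-expectation variances appearing in the corollary in terms of exactly the objects constrained by the hypotheses, namely the total variances $\Var(f(\X))$, $\Var(q(\Xtilde))$ and the conditional variances $\Var_{\X_\ub}(f(\X)|\X_\vb)$, $\Var_{\Xtilde_\ubtilde}(q(\Xtilde)|\X_\vb)$.

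Subtracting the two identities and applying the triangle inequality, I would obtain
\[
\left|\Var_{\X_\vb}\left(\E_{\X_\ub}[f(\X)|\X_\vb]\right) - \Var_{\X_\vb}\left(\E_{\Xtilde_\ubtilde}[q(\Xtilde)|\X_\vb]\right)\right| \le \left|\Var(f(\X)) - \Var(q(\Xtilde))\right| + \left|\E_{\X_\vb}\left[\Var_{\X_\ub}(f(\X)|\X_\vb) - \Var_{\Xtilde_\ubtilde}(q(\Xtilde)|\X_\vb)\right]\right|.
\]
Assumption (2) bounds the first term on the right by $\errtwo$. For the second term I would move the absolute value inside the expectation via Jensen's inequality, giving $\E_{\X_\vb}\left[\left|\Var_{\X_\ub}(f(\X)|\X_\vb) - \Var_{\Xtilde_\ubtilde}(q(\Xtilde)|\X_\vb)\right|\right]$, which is precisely the $\mathcal{L}_1$ norm appearing in Assumption (1) and is hence bounded by $\errone$. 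Combining the two bounds yields $\errone + \errtwo$, as claimed.

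There is no serious obstacle in this argument; the only step requiring care is the bookkeeping in the law of total variance, namely ensuring that conditioning is on the shared group $\X_\vb$ so that the remaining randomness in the inner variance is exactly $\X_\ub$ (respectively $\Xtilde_\ubtilde$). This is legitimate precisely because the groups are assumed statistically independent. I would note that Assumption (3), normalizing $\Var(f(\X)) = 1$, is \emph{not} required for this corollary; it enters only when these conditional-variance estimates are converted into bounds on the normalized Sobol' indices in Propositions~\ref{prop:bounds} and~\ref{prop:bounds_2}. This corollary is therefore the key intermediate estimate from which both propositions follow by controlling the remaining denominators.
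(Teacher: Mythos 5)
Your proof is correct and follows essentially the same route as the paper's: both decompose the variance of the conditional expectation via the law of total variance, apply the triangle inequality, bound the total-variance difference by Assumption (2), and use Jensen's inequality to reduce the remaining term to the $\mathcal{L}_1$ bound of Assumption (1). Your added observation that Assumption (3) is not needed for the corollary is also accurate.
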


\begin{proof}
    Recall the Law of Total Variance which states 
    \begin{eqnarray*}
        \V(f(\binpRV)) = \E_{\X_\vb} \left[ \V_{\X_\ub} \left(f(\binpRV) | \X_\vb\right) \right]
        + \V_{\X_\vb} \left( \E_{\X_\ub} \left[ f(\binpRV) | \X_\vb \right] \right)
    \end{eqnarray*}
    Then, using triangle inequality, and linearity of the expectation we have that 
    \begin{eqnarray}\nonumber
        &&\bigg| \V \left( \E \left[ f(\binpRV) | \X_\vb \right]\right) -
        \V \left( \E \left[ q(\Xtilde) | \X_\vb \right]\right)\bigg| \\\nonumber
         =&& \bigg|
        \left(\V(f(\binpRV)) - \V(q(\Xtilde))\right) 
        - \left(
            \E \left[ \V \left(f(\binpRV) | \X_\vb\right) \right]
            -
            \E \left[ \V\left(q(\Xtilde) | \X_\vb\right) \right] 
        \right)
        \bigg| \\\label{eq:prop1_1}
        \leq &&
        \bigg|\V(f(\binpRV)) - \V(q(\Xtilde)) \bigg|
        +
        \bigg| \E \left[ \V \left(f(\binpRV) | \X_\vb\right) 
        -
        \V \left(q(\Xtilde) | \X_\vb\right) \right] \bigg|,
    \end{eqnarray}
    where we have removed subscripts on the expectation and variance for notational ease.
    From Assumption (2) in Proposition~\ref{prop:bounds},
    we have that \eqref{eq:prop1_1} is 
    \begin{eqnarray}\nonumber
        < && \errtwo + \bigg| \E \left[ \V \left(f(\binpRV) | \X_\vb\right) 
        -
        \V \left(q(\Xtilde) | \X_\vb\right) \right] \bigg| 
        \\\nonumber
        \leq &&
        \errtwo + \E \left[ \bigg| 
        \V \left(f(\binpRV) | \X_\vb\right) 
        -
        \V \left(q(\Xtilde) | \X_\vb\right)
        \bigg|\right],
    \end{eqnarray}
    by Jensen's inequality.
    Corollary~\ref{cor:2} follows from Assumption (1) and the fact that $E\left[|\cdot |\right]$ is the $\mathcal{L}_1$-norm.
\end{proof}

We now prove Proposition~\ref{prop:bounds} as follows:
\begin{proof}
   
    \begin{eqnarray}\nonumber
        \big| \suf_{\ug} - \sg_{\ug'}\big|
        && =
        \left| \frac{\V \left( \E \left[ f(\binpRV)| \X_\ub\right] \right)}{\V(f(\binpRV))} - \frac{\V \left( \E \left[ q(\Xtilde)| \Xtilde_\ubtilde\right] \right)}{\V(q(\Xtilde))}\right|
        \\\nonumber
        && = 
        \Bigg|
        \frac{\V(q(\Xtilde))\V\left( \E \left[ f(\binpRV)| \X_\ub\right] \right)
        -
        \V(q(\Xtilde))\V\left( \E \left[ q(\Xtilde)| \Xtilde_\ubtilde\right] \right)}
        {\V(f(\binpRV))\V(q(\Xtilde))}
         \\\nonumber
        &&+
        \frac{
        \V(q(\Xtilde))\V \left( \E \left[ q(\Xtilde)| \Xtilde_\ubtilde\right]\right)
        -
        \V(f(\binpRV))\V \left( \E \left[ q(\Xtilde)| \Xtilde_\ubtilde\right] \right)
        }{\V(f(\binpRV))\V(q(\Xtilde))}
        \Bigg| \\\nonumber
        && \leq
        \left| 
            \frac{\V(q(\Xtilde)) \left(
            \V\left( \E \left[ f(\binpRV)| \X_\ub\right] \right)
        -
        \V\left( \E \left[ q(\Xtilde)| \Xtilde_\ubtilde\right] \right)\right)}
        {\V(f(\binpRV))\V(q(\Xtilde))}
        \right|
        \\\label{eq:prop2_1}
        &&- 
        \left|
            \frac{
        \V \left( \E \left[ q(\Xtilde)| \Xtilde_\ubtilde\right]\right) 
        \left( \V(q(\Xtilde)) - \V(f(\binpRV))\right)}
        {\V(f(\binpRV))\V(q(\Xtilde))}
        \right|,
    \end{eqnarray} 
    by the triangle inequality. Then  \eqref{eq:prop2_1} is 
    \begin{eqnarray}\nonumber
       = && \frac{\left| \V\left( \E \left[ f(\binpRV)| \X_\ub\right] \right)
       -
       \V\left( \E \left[ q(\Xtilde)| \Xtilde_\ubtilde\right] \right)\right|}{\V(f(\binpRV))}
       +
       \frac{\sg_{\ug'}\left| \V(q(\Xtilde)) - \V(f(\binpRV))  \right|}{\V(f(\binpRV))}\\\label{eq:prop2_2}
       = &&
       \left| \V\left( \E \left[ f(\binpRV)| \X_\ub\right] \right)
       -
       \V\left( \E \left[ q(\Xtilde)| \Xtilde_\ubtilde\right] \right)\right|
       + \sg_{\ug'}\left| \V(q(\Xtilde)) - \V(f(\binpRV))  \right|
    \end{eqnarray}
    by Assumption (3).
    Thus, by Corollary~\ref{cor:2} and Assumption (2), \eqref{eq:prop2_2} is 
    \begin{eqnarray*}
        < && \errtwo + \errone + \sg_{\ug'}\errtwo.
    \end{eqnarray*}
    Since all Sobol' indices are bounded above by $1$ under the assumption of independence between groups, we can conclude that
    \begin{eqnarray}
        \big| \suf_{\ug} - \sg_{\ug'}\big| < 2\errtwo + \errone.
    \end{eqnarray} 
\end{proof}

Next, we prove Proposition~\ref{prop:bounds_2} as follows:
\begin{proof}
    \begin{eqnarray}\nonumber
        \left| \ts_{\ug} - \ts_{\ug'} \right| 
        = &&
        \left| 
        \frac{\E \left[ \V \left(f(\binpRV) | \X_\vb\right) \right]}{\V(f(\binpRV))} 
        -
        \frac{\E \left[ \V \left(q(\Xtilde) | \X_\vb\right) \right]}{\V(q(\Xtilde))}
        \right| \\\nonumber
        = &&
        \bigg| 
        \frac{\E \left[ \V \left(f(\binpRV) | \X_\vb\right) \right] \V(q(\Xtilde))
        -
        \E \left[ \V \left(q(\Xtilde) | \X_\vb\right) \right]\V(q(\Xtilde))}{\V(f(\binpRV))\V(q(\Xtilde))} 
        \\\nonumber
        + &&
        \frac{
        \E \left[ \V \left(q(\Xtilde) | \X_\vb\right) \right]\V(q(\Xtilde))
        -
        \E \left[ \V \left(q(\Xtilde) | \X_\vb\right) \right]\V(f(\binpRV))
        }
        {\V(f(\binpRV))\V(q(\Xtilde))}
        \bigg| 
        \\\nonumber
        \leq &&
        \left| 
        \frac{\V(q(\Xtilde)) \left(
        \E \left[ \V \left(f(\binpRV) | \X_\vb\right) \right] 
        -
        \E \left[ \V \left(q(\Xtilde) | \X_\vb\right) \right]\right)}{\V(f(\binpRV))\V(q(\Xtilde))} 
        \right|
        \\\label{eq:prop3_1}
        + &&
        \left|
        \frac{
        \E \left[ \V \left(q(\Xtilde) | \X_\vb\right) \right]
        \left(\V(q(\Xtilde)) - \V(f(\binpRV))\right)
        }
        {\V(f(\binpRV))\V(q(\Xtilde))}
        \right|,
    \end{eqnarray}
    by the triangle inequality. 
    Again, we note Assumption (3), which, along with Assumption (1) allows us to write \eqref{eq:prop3_1} as 
    \begin{eqnarray}\nonumber
        = && \left| \E \left[ \V \left(f(\binpRV) | \X_\vb\right) \right] 
        -
        \E \left[ \V \left(q(\Xtilde) | \X_\vb\right) \right] \right| + \ts_{\ug'} \left| \V(q(\Xtilde)) - \V(f(\binpRV))\right| \\\nonumber
        <&& \errone + \ts_{\ug'} \errtwo,
    \end{eqnarray} 
    Again, since the total Sobol' indices are bounded above by $1$ under the assumption of independence between groups, we conclude that 
    \begin{eqnarray}
        \left| \ts_{\ug} - \ts_{\ug'} \right| 
        < \errone + \errtwo.
    \end{eqnarray}
\end{proof}

\section{Data-consistent inversion}
\label{sec:Appendix_C}

Here, details are provided on the use of data-consistent inversion (DCI) to formulate the sampling distribution on the linear operator MFU parameters -- needed to evaluate robustness (see,~\Cref{sec:comp_sobol_indices}) for the contaminant transport problem.
Given an initial distribution on parameters $\pi_0(\Xtu)$, the DCI solution is defined as
\begin{align}
  \pi_{update}(\Xtu) = \pi_0(\Xtu) \frac{\pi_{target}(q(\Xtu))}{\pi_{predict}(q(\Xtu))},
  \label{eq:dataconsistent_update}
\end{align}
where $\pi_{target}$ is the target probability density of the model output using the fractional derivative MFU representation, and 
$\pi_{predict}$ is the probability density of the pushforward of the initial density using the general linear operator representation.
The solution to~\eqref{eq:dataconsistent_update}, $\pi_{update}(\Xtu)$, has the property that, for any set $A$ in the Borel $\sigma$-algebra defined on the output space, $\mathcal{B}_D$, 
\begin{align*}
  \mathbb{P}_{update}(q^{-1}(A)) \equiv \int_{q^{-1}(A)} \pi_{update}(\Xtu) \d \Xtu = \int_A \pi_{target}( q ) \d q \equiv \mathbb{P}_{target}(A), \quad \forall A\in \mathcal{B}_D.
\end{align*}
That is, the pushforward of the updated density is consistent with the target density.
However,  
in practice, it is necessary to approximate the update using, e.g., rejection sampling, as described in Algorithm 2 of~\cite{butler2018}. 
As a result,
the output statistics produced using $\pi_{update}(\Xtu)$ will be similar (but not identical) to the output statistics produced using the fractional derivative MFU representation.

Implementation of DCI requires two key elements: approximating $\pi_{target}$ and determining $\pi_0(\Xtu)$. 
To approximate $\pi_{target}$, we construct a kernel-density estimate (KDE) using $10^3$ samples of $f(\X)$ with $\X_{\vg}$ set to their mean values and $\X_\ub$ sampled according to \Cref{eq:nu_dist,eq:alpha_dist}.
We defined $\pi_0(\Xtu)$ using a multivariate normal approximation of the fractional derivative eigenvalues as follows.
Given samples of $\nu_m$ and $\alpha$, eigenvalues of the fractional derivative operator are computed according to~\eqref{eq:fractional}.
The distribution of fractional derivative eigenvalues can then be approximated by a multivariate normal distribution using the sample mean and covariance of the samples, where the multivariate normal distribution is constructed with respect to the following unraveled and transformed values of the eigenvalues: $\big[\log(-\Re[\lambda_1]), \log(-\Re[\lambda_2]), \ldots, \log(\Im[\lambda_1]), \log(\Im[\lambda_2]), \ldots]\big]$. 
Thus, we generate samples from $\pi_0(\Xtu)$, as follows:
\begin{itemize}
  \item Sample $R_1, R_2, \ldots, I_1, I_2, \ldots \sim \mathcal{N}(\hat{\mu}, \hat{\Sigma})$. 
  \item $\lambda_k = -\exp(R_k) + i \exp(I_k)$.  
\end{itemize}

Histograms of the QoI from the target distribution and before and after DCI are shown in \Cref{fig:data_consistent_update}.
Note that the histograms of $\pi_{target}$ and the pushforward of the data-consistent update, $\pi_{update,Q}$ are very similar. 
Additionally, the updated density is computed with the other sources of uncertainty in the model (the model parameters) set at their mean values; as they vary during VBSA, we expect there to be differences between the model outputs statistics for the two MFU representations due to sampling the model parameters.
The goal of this DCI procedure is to produce similar output distributions for the purpose of numerically verifying our robustness bounds, so this is  desirable.

\begin{figure}[h!]
  \includegraphics{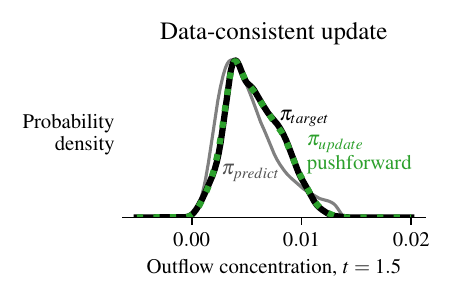}\vspace{-1em}
  \caption{Kernel density estimates for the three densities described herein. The pushforward of the data-consistent update to the QoI is pictured here, rather than the updated distribution itself.}
  \label{fig:data_consistent_update}
\end{figure}

\FloatBarrier

\singlespacing
\bibliographystyle{unsrturl} 
\bibliography{references}

\begin{thebibliography}{10}

\bibitem{kennedy_bayesian_2001}
Marc~C. Kennedy and Anthony O'Hagan.
\newblock Bayesian calibration of computer models.
\newblock {\em Journal of the Royal Statistical Society: Series B (Statistical Methodology)}, 63(3):425--464, January 2001.
\newblock \href {https://doi.org/10.1111/1467-9868.00294} {\path{doi:10.1111/1467-9868.00294}}.

\bibitem{asme_committee_vv_20_standard_2009}
{ASME Committee V\&V 20}.
\newblock {\em Standard for {Verification} and {Validation} in {Computational} {Fluid} {Dynamics} and {Heat} {Transfer}}.
\newblock ASME, 2009.
\newblock URL: \url{https://www.asme.org/codes-standards/find-codes-standards/standard-for-verification-and-validation-in-computational-fluid-dynamics-and-heat-transfer}.

\bibitem{sargsyan_statistical_2015}
K~Sargsyan, HN~Najm, and R~Ghanem.
\newblock On the statistical calibration of physical models.
\newblock {\em International Journal of Chemical Kinetics}, 47(4):246--276, 2015.
\newblock Publisher: Wiley Online Library.
\newblock URL: \url{https://doi.org/10.1002/kin.20906}, \href {https://doi.org/10.1002/kin.2090} {\path{doi:10.1002/kin.2090}}.

\bibitem{acquesta_model-form_2022}
Erin Acquesta, Teresa Portone, Raj Dandekar, Chris Rackauckas, Rileigh Bandy, and Jose Huerta.
\newblock Model-{Form} {Epistemic} {Uncertainty} {Quantification} for {Modeling} with {Differential} {Equations}: {Application} to {Epidemiology}.
\newblock Technical {Report} SAND2022-12823, Sandia National Lab.(SNL-NM), Albuquerque, NM (United States), 2022.
\newblock URL: \url{https://doi.org/10.2172/1888443}.

\bibitem{subramanian_bayesian_2019}
Abhinav Subramanian and Sankaran Mahadevan.
\newblock Bayesian estimation of discrepancy in dynamics model prediction.
\newblock {\em Mechanical Systems and Signal Processing}, 123:351--368, May 2019.
\newblock URL: \url{https://www.sciencedirect.com/science/article/pii/S0888327019300159}, \href {https://doi.org/10.1016/j.ymssp.2019.01.014} {\path{doi:10.1016/j.ymssp.2019.01.014}}.

\bibitem{subramanian_error_2019}
Abhinav Subramanian and Sankaran Mahadevan.
\newblock Error estimation in coupled multi-physics models.
\newblock {\em Journal of Computational Physics}, 395:19--37, October 2019.
\newblock URL: \url{https://www.sciencedirect.com/science/article/pii/S0021999119304206}, \href {https://doi.org/10.1016/j.jcp.2019.06.013} {\path{doi:10.1016/j.jcp.2019.06.013}}.

\bibitem{subramanian_model_2020}
Abhinav Subramanian and Sankaran Mahadevan.
\newblock Model {Error} {Propagation} in {Coupled} {Multiphysics} {Systems}.
\newblock {\em AIAA Journal}, 58(5):2236--2245, May 2020.
\newblock Publisher: American Institute of Aeronautics and Astronautics.
\newblock \href {https://doi.org/10.2514/1.J058496} {\path{doi:10.2514/1.J058496}}.

\bibitem{oliver_validating_2015}
Todd~A. Oliver, Gabriel Terejanu, Christopher~S. Simmons, and Robert~D. Moser.
\newblock Validating predictions of unobserved quantities.
\newblock {\em Computer Methods in Applied Mechanics and Engineering}, 283:1310--1335, January 2015.
\newblock URL: \url{https://www.sciencedirect.com/science/article/pii/S004578251400293X}, \href {https://doi.org/10.1016/j.cma.2014.08.023} {\path{doi:10.1016/j.cma.2014.08.023}}.

\bibitem{morrison_representing_2018}
Rebecca~E. Morrison, Todd~A. Oliver, and Robert~D. Moser.
\newblock Representing {Model} {Inadequacy}: {A} {Stochastic} {Operator} {Approach}.
\newblock {\em SIAM/ASA Journal on Uncertainty Quantification}, 6(2):457--496, January 2018.
\newblock Publisher: Society for Industrial and Applied Mathematics.
\newblock \href {https://doi.org/10.1137/16M1106419} {\path{doi:10.1137/16M1106419}}.

\bibitem{portone_representing_2019}
Teresa Portone.
\newblock {\em Representing model-form uncertainty from missing microstructural information}.
\newblock PhD thesis, University of Texas at Austin, 2019.
\newblock URL: \url{http://dx.doi.org/10.26153/tsw/10112}.

\bibitem{bandy_quantifying_2024}
Rileigh Bandy and Rebecca Morrison.
\newblock Quantifying {Model} {Form} {Uncertainty} in {Spring}-{Mass}-{Damper} {Systems}.
\newblock In Roland Platz, Garrison Flynn, Kyle Neal, and Scott Ouellette, editors, {\em Model {Validation} and {Uncertainty} {Quantification}, {Volume} 3}, pages 9--19, Cham, 2024. Springer Nature Switzerland.

\bibitem{bandy_stochastic_2024}
R.~Bandy and R.~Morrison.
\newblock Stochastic model corrections for reduced {Lotka}–{Volterra} models exhibiting mutual, commensal, competitive, and predatory interactions.
\newblock {\em Chaos: An Interdisciplinary Journal of Nonlinear Science}, 34(1):013116, January 2024.
\newblock \href {https://doi.org/10.1063/5.0159043} {\path{doi:10.1063/5.0159043}}.

\bibitem{bandy_stochastic_2025}
Rileigh Bandy, Teresa Portone, and Rebecca Morrison.
\newblock Stochastic {Model} {Correction} for the {Adaptive} {Vibration} {Isolation} {Round}-{Robin} {Challenge}.
\newblock In Roland Platz, Garrison Flynn, Kyle Neal, and Scott Ouellette, editors, {\em Model {Validation} and {Uncertainty} {Quantification}, {Vol}. 3}, pages 53--62, Cham, 2025. Springer Nature Switzerland.
\newblock \href {https://doi.org/10.1007/978-3-031-68893-5_8} {\path{doi:10.1007/978-3-031-68893-5_8}}.

\bibitem{sobol_1993}
I.M. Sobol{'}.
\newblock Sensitivity {Estimates} for {Nonlinear} {Mathematical} {Models}.
\newblock {\em Math. Model. Comput}, pages 407--413, 1993.

\bibitem{sobol_2001}
I.M. Sobol{'}.
\newblock Global sensitivity indices for nonlinear mathematical models and their {Monte} {Carlo} estimates.
\newblock {\em The Second IMACS Seminar on Monte Carlo Methods}, 55(1):271--280, February 2001.
\newblock Number: 1.

\bibitem{prieur_2017}
Clémentine Prieur and Stefano Tarantola.
\newblock Variance-{Based} {Sensitivity} {Analysis}: {Theory} and {Estimation} {Algorithms}.
\newblock In Roger Ghanem, David Higdon, and Houman Owhadi, editors, {\em Handbook of {Uncertainty} {Quantification}}, pages 1217--1239. Springer International Publishing, Cham, 2017.

\bibitem{li_role_2016}
Chenzhao Li and Sankaran Mahadevan.
\newblock Role of calibration, validation, and relevance in multi-level uncertainty integration.
\newblock {\em Reliability Engineering \& System Safety}, 148:32--43, April 2016.
\newblock URL: \url{https://www.sciencedirect.com/science/article/pii/S0951832015003403}, \href {https://doi.org/10.1016/j.ress.2015.11.013} {\path{doi:10.1016/j.ress.2015.11.013}}.

\bibitem{paquette-rufiange_optimal_2023}
Antonin Paquette-Rufiange, Serge Prudhomme, and Marc Laforest.
\newblock Optimal design of validation experiments for the prediction of quantities of interest.
\newblock {\em Computer Methods in Applied Mechanics and Engineering}, 414:116182, September 2023.
\newblock URL: \url{https://www.sciencedirect.com/science/article/pii/S0045782523003067}, \href {https://doi.org/10.1016/j.cma.2023.116182} {\path{doi:10.1016/j.cma.2023.116182}}.

\bibitem{bjk_kleijn_bernstein-von-mises_2012}
{B.J.K. Kleijn} and {A.W. van der Vaart}.
\newblock The {Bernstein}-{Von}-{Mises} theorem under misspecification.
\newblock {\em Electronic Journal of Statistics}, 6(none):354--381, January 2012.
\newblock \href {https://doi.org/10.1214/12-EJS675} {\path{doi:10.1214/12-EJS675}}.

\bibitem{maupin_model_2020}
Kathryn~A. Maupin and Laura~P. Swiler.
\newblock Model discrepancy calibration across experimental settings.
\newblock {\em Reliability Engineering \& System Safety}, 200:106818, August 2020.
\newblock URL: \url{https://www.sciencedirect.com/science/article/pii/S0951832019301802}, \href {https://doi.org/10.1016/j.ress.2020.106818} {\path{doi:10.1016/j.ress.2020.106818}}.

\bibitem{brynjarsdottir_learning_2014}
Jenn\'{y} Brynjarsd\'{o}ttir and Anthony O'Hagan.
\newblock Learning about physical parameters: the importance of model discrepancy.
\newblock {\em Inverse Problems}, 30(11):114007, October 2014.
\newblock Publisher: IOP Publishing.
\newblock URL: \url{http://dx.doi.org/10.1088/0266-5611/30/11/114007}, \href {https://doi.org/10.1088/0266-5611/30/11/114007} {\path{doi:10.1088/0266-5611/30/11/114007}}.

\bibitem{sargsyan_embedded_2019}
Khachik Sargsyan, Xun Huan, and Habib~N. Najm.
\newblock Embedded model error representation for {Bayesian} model calibration.
\newblock {\em International Journal for Uncertainty Quantification}, 9(4):365--394, 2019.
\newblock URL: \url{https://arxiv.org/pdf/1801.06768}, \href {https://doi.org/10.1615/int.j.uncertaintyquantification.2019027384} {\path{doi:10.1615/int.j.uncertaintyquantification.2019027384}}.

\bibitem{saltelli_2010}
Andrea Saltelli, Paola Annoni, Ivano Azzini, Francesca Campolongo, Marco Ratto, and Stefano Tarantola.
\newblock Variance based sensitivity analysis of model output. {Design} and estimator for the total sensitivity index.
\newblock {\em Computer Physics Communications}, 181(2):259--270, February 2010.
\newblock Number: 2.

\bibitem{Owen_2013}
Art~B. Owen.
\newblock Better estimation of small sobol' sensitivity indices.
\newblock {\em ACM Trans. Model. Comput. Simul.}, 23(2), may 2013.
\newblock \href {https://doi.org/10.1145/2457459.2457460} {\path{doi:10.1145/2457459.2457460}}.

\bibitem{mandel_randomized_2018}
David Mandel and Giray Ökten.
\newblock Randomized {Sobol}’ {Sensitivity} {Indices}.
\newblock In Art~B. Owen and Peter~W. Glynn, editors, {\em Monte {Carlo} and {Quasi}-{Monte} {Carlo} {Methods}}, pages 395--408, Cham, 2018. Springer International Publishing.

\bibitem{krzykacz-hausmann_approximate_2006}
Bernard Krzykacz-Hausmann.
\newblock An approximate sensitivity analysis of results from complex computer models in the presence of epistemic and aleatory uncertainties.
\newblock {\em Reliability Engineering \& System Safety}, 91(10):1210--1218, 2006.
\newblock URL: \url{https://www.sciencedirect.com/science/article/pii/S0951832005002309}, \href {https://doi.org/10.1016/j.ress.2005.11.019} {\path{doi:10.1016/j.ress.2005.11.019}}.

\bibitem{10.1214/12-EJS749}
Gaelle Chastaing, Fabrice Gamboa, and Cl{\'e}mentine Prieur.
\newblock {Generalized Hoeffding-Sobol decomposition for dependent variables - application to sensitivity analysis}.
\newblock {\em Electronic Journal of Statistics}, 6(none):2420 -- 2448, 2012.
\newblock \href {https://doi.org/10.1214/12-EJS749} {\path{doi:10.1214/12-EJS749}}.

\bibitem{MARA2012115}
Thierry~A. Mara and Stefano Tarantola.
\newblock Variance-based sensitivity indices for models with dependent inputs.
\newblock {\em Reliability Engineering \& System Safety}, 107:115--121, 2012.
\newblock SAMO 2010.
\newblock URL: \url{https://www.sciencedirect.com/science/article/pii/S0951832011001724}, \href {https://doi.org/10.1016/j.ress.2011.08.008} {\path{doi:10.1016/j.ress.2011.08.008}}.

\bibitem{KUCHERENKO2012937}
S.~Kucherenko, S.~Tarantola, and P.~Annoni.
\newblock Estimation of global sensitivity indices for models with dependent variables.
\newblock {\em Computer Physics Communications}, 183(4):937--946, 2012.
\newblock URL: \url{https://www.sciencedirect.com/science/article/pii/S0010465511004085}, \href {https://doi.org/10.1016/j.cpc.2011.12.020} {\path{doi:10.1016/j.cpc.2011.12.020}}.

\bibitem{RABITZ20107587}
Herschel Rabitz.
\newblock Global sensitivity analysis for systems with independent and/or correlated inputs.
\newblock {\em Procedia - Social and Behavioral Sciences}, 2(6):7587--7589, 2010.
\newblock Sixth International Conference on Sensitivity Analysis of Model Output.
\newblock URL: \url{https://www.sciencedirect.com/science/article/pii/S1877042810012723}, \href {https://doi.org/10.1016/j.sbspro.2010.05.131} {\path{doi:10.1016/j.sbspro.2010.05.131}}.

\bibitem{ZHOU20144885}
Changcong Zhou, Zhenzhou Lu, Leigang Zhang, and Jixiang Hu.
\newblock Moment independent sensitivity analysis with correlations.
\newblock {\em Applied Mathematical Modelling}, 38(19):4885--4896, 2014.
\newblock URL: \url{https://www.sciencedirect.com/science/article/pii/S0307904X14001504}, \href {https://doi.org/10.1016/j.apm.2014.03.047} {\path{doi:10.1016/j.apm.2014.03.047}}.

\bibitem{Iooss_2019}
Bertrand Iooss and Clementine Prieur.
\newblock Shapley effects for sensitivity analysis with correlated inputs: Comparisons with sobol' indices, numerical estimation and applications.
\newblock {\em International Journal for Uncertainty Quantification}, 9(5):493--514, 2019.

\bibitem{DaVeiga03052015}
Sebastien~Da Veiga.
\newblock Global sensitivity analysis with dependence measures.
\newblock {\em Journal of Statistical Computation and Simulation}, 85(7):1283--1305, 2015.
\newblock \href {https://doi.org/10.1080/00949655.2014.945932} {\path{doi:10.1080/00949655.2014.945932}}.

\bibitem{Hart_2018}
Joseph Hart and Pierre~A. Gremaud.
\newblock An approximation theoretic perspective of sobol' indices with dependent variables.
\newblock {\em International Journal for Uncertainty Quantification}, 8(6):483--493, 2018.

\bibitem{bear_modeling_2010}
Jacob Bear and Alexander H-D Cheng.
\newblock {\em Modeling groundwater flow and contaminant transport}, volume~23.
\newblock Springer, 2010.
\newblock URL: \url{https://doi.org/10.1007/978-1-4020-6682-5}.

\bibitem{levy_measurement_2003}
Melissa Levy and Brian Berkowitz.
\newblock Measurement and analysis of non-{Fickian} dispersion in heterogeneous porous media.
\newblock {\em Journal of Contaminant Hydrology}, 64(3):203--226, July 2003.
\newblock URL: \url{https://www.sciencedirect.com/science/article/pii/S0169772202002048}, \href {https://doi.org/10.1016/S0169-7722(02)00204-8} {\path{doi:10.1016/S0169-7722(02)00204-8}}.

\bibitem{yeh_flow_2015}
Tian-Chyi Yeh, Raziuddin Khaleel, and Kenneth~C Carroll.
\newblock {\em Flow through heterogeneous geologic media}.
\newblock Cambridge University Press, 2015.
\newblock URL: \url{https://doi.org/10.1017/CBO9781139879323}.

\bibitem{neuman_perspective_2009}
Shlomo~P. Neuman and Daniel~M. Tartakovsky.
\newblock Perspective on theories of non-{Fickian} transport in heterogeneous media.
\newblock {\em Dispersion in Porous Media}, 32(5):670--680, May 2009.
\newblock URL: \url{https://www.sciencedirect.com/science/article/pii/S0309170808001474}, \href {https://doi.org/10.1016/j.advwatres.2008.08.005} {\path{doi:10.1016/j.advwatres.2008.08.005}}.

\bibitem{portone_bayesian_2022}
Teresa Portone and Robert~D. Moser.
\newblock Bayesian {Inference} of an {Uncertain} {Generalized} {Diffusion} {Operator}.
\newblock {\em SIAM/ASA Journal on Uncertainty Quantification}, pages 151--178, February 2022.
\newblock Publisher: Society for Industrial and Applied Mathematics.
\newblock \href {https://doi.org/10.1137/21M141659X} {\path{doi:10.1137/21M141659X}}.

\bibitem{butler2018}
T.~Butler, J.~Jakeman, and T.~Wildey.
\newblock Combining {Push}-{Forward} {Measures} and {Bayes}' {Rule} to {Construct} {Consistent} {Solutions} to {Stochastic} {Inverse} {Problems}.
\newblock {\em SIAM Journal on Scientific Computing}, 40(2):A984--A1011, January 2018.
\newblock Publisher: Society for Industrial and Applied Mathematics.

\bibitem{butler2020}
Troy Butler, T~Wildey, and Tian~Yu Yen.
\newblock Data-consistent inversion for stochastic input-to-output maps.
\newblock {\em Inverse Problems}, 36(8):085015, August 2020.
\newblock Publisher: IOP Publishing.

\bibitem{portone_theoretical_2025}
Teresa Portone and Rebekah~D. White.
\newblock Theoretical and methodological challenges in hierarchical {Bayesian} inference for model-form uncertainty.
\newblock Sandia {Technical} {Report} SAND2025-10780, United States, September 2025.
\newblock URL: \url{https://doi.org/10.2172/2587508}.

\bibitem{li_efficient_2016}
Chenzhao Li and Sankaran Mahadevan.
\newblock An efficient modularized sample-based method to estimate the first-order {Sobol}' index.
\newblock {\em Reliability Engineering \& System Safety}, 153:110--121, September 2016.
\newblock URL: \url{https://www.sciencedirect.com/science/article/pii/S0951832016300266}, \href {https://doi.org/10.1016/j.ress.2016.04.012} {\path{doi:10.1016/j.ress.2016.04.012}}.

\bibitem{plischke_global_2013}
Elmar Plischke, Emanuele Borgonovo, and Curtis~L. Smith.
\newblock Global sensitivity measures from given data.
\newblock {\em European Journal of Operational Research}, 226(3):536--550, May 2013.
\newblock URL: \url{https://www.sciencedirect.com/science/article/pii/S0377221712008995}, \href {https://doi.org/10.1016/j.ejor.2012.11.047} {\path{doi:10.1016/j.ejor.2012.11.047}}.

\bibitem{borgonovo_common_2016}
Emanuele Borgonovo, Gordon~B. Hazen, and Elmar Plischke.
\newblock A {Common} {Rationale} for {Global} {Sensitivity} {Measures} and {Their} {Estimation}.
\newblock {\em Risk Analysis}, 36(10):1871--1895, October 2016.
\newblock Publisher: John Wiley \& Sons, Ltd.
\newblock \href {https://doi.org/10.1111/risa.12555} {\path{doi:10.1111/risa.12555}}.

\bibitem{qian_multifidelity_2018}
E.~Qian, B.~Peherstorfer, D.~O'Malley, V.~V. Vesselinov, and K.~Willcox.
\newblock Multifidelity {Monte} {Carlo} {Estimation} of {Variance} and {Sensitivity} {Indices}.
\newblock {\em SIAM/ASA Journal on Uncertainty Quantification}, 6(2):683--706, January 2018.
\newblock Publisher: Society for Industrial and Applied Mathematics.
\newblock \href {https://doi.org/10.1137/17M1151006} {\path{doi:10.1137/17M1151006}}.

\end{thebibliography}

\end{document}